\documentclass[runningheads]{llncs}
\usepackage[T1]{fontenc}
%
\usepackage{graphicx}
\usepackage{tikz}

\usepackage{amsmath}
\usepackage{amssymb}

\usepackage{verbatim}
\usepackage{url}
\usepackage{tikz}

\newcommand{\ta}{\alpha}
\newcommand{\taa}{\beta}

\spnewtheorem{observation}{Observation}{\bfseries}{\rmfamily}

\begin{document}
\title{Attributed Tree Transducers for Partial Functions} 
%
%

\author{Sebastian Maneth \and
  Martin Vu}
\authorrunning{S. Maneth and
  M. Vu}
%
\institute{Universit\"at Bremen, Germany\\
  \email{\{maneth,martin.vu\}@uni-bremen.de}
}

\maketitle

\begin{abstract}
Attributed tree transducers (atts) have been equipped with regular look-around
(i.e., a preprocessing via an attributed relabeling) in order to obtain
a more robust class of translations. Here we give further evidence of
this robustness: we show that if the class of translations realized by nondeterministic
atts with regular look-around is restricted to partial functions, then we obtain
exactly the class of translations realized by deterministic atts with
regular look-around.
\end{abstract}
\section{Introduction}
Attributed tree transducers (atts)~\cite{DBLP:journals/actaC/Fulop81}
are a well known formalism for defining tree translations.
They are attractive, because they are strictly more expressive than
top-down tree transducers and they closely model the behavior of
attribute grammars~\cite{DBLP:journals/mst/Knuth68,Knuth71}.
Since attribute grammars are deterministic devices, atts also have typically been
studied in their deterministic version. 

But atts also have some deficiencies. For instance, they do not generalize
the (deterministic) bottom-up tree translations~\cite{DBLP:journals/iandc/FulopV95}.
One possibility to remedy this deficiency is to equip atts with 
\emph{regular look-around}~\cite{DBLP:journals/jcss/BloemE00}.
	The resulting class of translations is more robust in the sense that
	(\emph{i})~it does generalized deterministic bottom-up tree translations~\cite{DBLP:journals/mst/Engelfriet77},
	(\emph{ii})~it is equivalent to tree-to-dag-to-tree translations that are definable in 
	MSO logic~\cite{DBLP:journals/jcss/BloemE00}, and 
	(\emph{iii})~it is characterized by natural restrictions of 
	macro tree transducers~\cite{DBLP:journals/tcs/HashimotoM23}.

In this paper we present another advantage of adding regular look-around:
we show that every nondeterministic att (with or without regular look-around) that is
\emph{functional} can be realized by a \emph{deterministic} att
with regular look-around (and such a transducer can be constructed).
In general it is a desirable and convenient property of a nondeterministic translation device
that its restriction to functional translations coincides precisely with the translations
realized by the corresponding deterministic version of that device.
Let us consider some classical examples of translation devices for which this property
holds:
\begin{itemize}
	\item
	two-way strings transducers (``2GSM'')~\cite[Theorem~22]{Engelfriet1999MSODS}\cite[Theorem~3]{DBLP:conf/lata/Souza13},
	\item
	top-down tree transducers with look-ahead~\cite[Theorem~1]{DBLP:journals/ipl/Engelfriet78}, and
	\item
	macro tree transducers~\cite[Corollary~36]{DBLP:journals/acta/EngelfrietIM21}.
\end{itemize}
In contrast, these are examples for which the property does \emph{not} hold:
\begin{itemize}
	\item
	one-way string transducers (``GSM''),
	\item  
	top-down and bottom-up tree transducers, and
	\item
	attributed tree transducers.
\end{itemize}

To see that functional  top-down tree transducers and functional GSMs are strictly more 
expressive than their deterministic counterparts, consider the following translation:
Let monadic input trees of the form $a^n(e)$ be translated to $a^n(e)$ while input trees of the form
$a^n(f)$ are translated to the single leaf $f$.
A deterministic top-down tree transducer (or a  deterministic GSM) cannot realize this translation,
because it has to decide whether or not to output $a$-nodes, before it sees the label
of the input leaf (viz. the right end of the string).

Given a relation 
$R$, a \emph{uniformizer of $R$} is a function
$f_R\subseteq R$ such that $f_R$ and $R$ have the same domain.
Engelfriet~\cite{DBLP:journals/ipl/Engelfriet78} showed that for any nondeterministic top-down tree translation $R$,
a uniformizer of $R$ can be constructed as a deterministic top-down tree transducer
with look-ahead ($dt^R$) . Since the latter type of transducer is closed under composition~\cite{DBLP:journals/mst/Engelfriet77},
it follows that for any composition of top-down (and bottom-up) tree transductions
that is functional, an equivalent $dt^R$ can be constructed.
Whenever several rules of the given transducer are applicable, the uniformizer chooses
the first one of them (in some order); the regular look-ahead is used to determine
which rules are applicable.
For an overview over different uniformization results in automata theory see~\cite{paper6}.

Now 
consider a functional attributed tree transducer (att). Its input trees are over the
binary labels $f$ and $g$ and the leaf label $e$ and its output trees are over the unary label $d$
and the leaf label $e$. 
The att outputs
$d^n(e)$ if there exists a first $g$-node in reverse pre-order and $n$ is the size of the subree rooted at that node 
(no output is produced, if no
such $g$-node exists).
This translation cannot be realized by any deterministic att,
even if it uses regular look-ahead.
However, there is a deterministic att with \emph{regular look-around} that realizes the translation.

Subsequently, we show that  any  (possibly circular) nondeterministic att $A$  effectively has a uniformizer that is realized
by a deterministic att with look-around.
To construct such a deterministic att $D$ with regular look-around, we first construct
a nondeterministic (and in general non-functional) top-down relabeling $T$ and
a deterministic att $D'$ which operate as follows: 
Informally, the top-down relabeling $T$ specifies which rules the deterministic att $D'$ should apply.
More precisely,
given an input tree $s$,
$T$ annotates all nodes of $s$ by rules of $A$. 
At every node, the deterministic att $D'$ applies exactly those rules with which the nodes are annotated.  
In particular, $T$ annotates the nodes of $s$ such that $D'$ simulates a \emph{uniform} translation of $A$,
that is, a translation in which if multiple instance of an attribute $\ta$ access the same node $v$,
then all these instances of $\ta$ need to apply the same rule. 
With the result of~\cite{DBLP:journals/ipl/Engelfriet78}, 
a look-around $U$ can be constructed that realizes a uniformizer of the translation of $T$. 
With the deterministic att $D'$, this look-around $U$ yields $D$.  

The existence of a uniformizer realized by a deterministic att with regular look-around
for any att implies that a composition of $n$ atts (with or without look-around)
that is functional can be simulated by a composition of $n$ deterministic atts with look-around.
This also implies that for any functional att there is effectively an equivalent deterministic att with look-around.

An abridged version of this paper has been accepted for publication at the Conference on Implementation and Application of Automata
(CIAA 2024).

\section{Preliminaries}\label{preliminaries}
Denote by $\mathbb{N}$ the set of natural numbers.
For $k\in\mathbb{N}$, we denote by $[k]$ the set $\{1,\dots,k\}$.

A set $\Sigma$ is \emph{ranked} if  
each symbol of $\Sigma$ is associated with a \emph{rank}, that is, a non-negative integer.
We write $\sigma^{k}$ to denote that the symbol $\sigma$ has
rank~$k$.
By $\Sigma_k$ we denote the set of  all symbols of $\Sigma$ which have rank $k$.
For $k'\neq k$, we define that $\Sigma_{k'}$ and $\Sigma_k$ are disjoint.
If the set $\Sigma$ is finite then we call $\Sigma$ a \emph{ranked alphabet}. 

The set $T_\Sigma$ of 
\textit{trees over $\Sigma$} is defined as the smallest set of strings such that
if $\sigma\in \Sigma_k$, $k\geq 0$, and $t_1,\dots,t_k \in T_\Sigma$ then
$\sigma(t_1,\dots, t_k)$ is in $T_\Sigma$. 
For $k=0$, we simply write $\sigma$
instead of $\sigma()$.

The nodes of a tree $t\in T_\Sigma$ are referred to by strings over~$\mathbb{N}$.
In particular, for $t=\sigma (t_1,\dots,t_k)$, we define $V(t)$, the set of nodes of $t$, 
as $V(t)=\{\epsilon\}\cup \{iu\mid i\in [k] \text{ and } u\in V(t_i)\}$,
where $\epsilon$ is the \emph{empty string}.
For better readability, we add dots between numbers,
e.g. for the tree $t=f(a,f(a,b))$ we have $V(t)=\{\epsilon,1,2,2.1,2.2\}$.
Let $v,v'\in V(t)$.
Then we call $v$ a (proper) ancestor of a node $v'$ if $v$ is a (proper) prefix of~$v'$.

For a node  $v\in V(t)$, $t[v]$ denotes the label of $v$, 
$t/v$ is the subtree of $t$ rooted at $v$, and
$t[v \leftarrow t']$ is  obtained from $t$ by replacing $t/v$ by~$t'$.
For instance, for $t=f(a,f(a,b))$,
we have $t[1]=a$, $t/2= f(a,b)$ and $t[1\leftarrow b]=f(b,f(a,b))$.
The \emph{size} of a tree $t$ is given by $\text{size} (t)=|V(t)|$.

For a set $\Lambda$ disjoint with $\Sigma$, we define $T_\Sigma [\Lambda]$ as $T_{\Sigma'}$ where $\Sigma'_0 =\Sigma_0\cup \Lambda$ and $\Sigma_k'=\Sigma_k$ for $k>0$.

Let $R\subseteq A\times B$ be a relation.
We call $R$ a \emph{function} if $(a,b), (a,b') \in R$ implies $b=b'$.
We define the  \emph{domain of $R$} by $\text{dom} (R)=\{a\in A\mid \exists b\in B:\ (a,b)\in R\}$.
Analogously,  the \emph{range of $R$} is $\text{range} (R)=\{b\in B\mid \exists a\in A:\ (a,b)\in R\}$.
A function $F\subseteq R$ is called a \emph{uniformizer} of $R$ if  $\text{dom} (F) = \text{dom} (R)$.
Let $R'\subseteq B\times C$. The composition of $R$ and $R'$ is $R\circ R'= \{(a,c) \mid (a,b)\in R, (b,c)\in R'\}$.

\section{Attributed Tree Transducers}
In the following, we define attributed tree transducers. For an in-depth introduction to attributed tree transducers,
we refer to~\cite{DBLP:series/eatcs/FulopV98}.

	A \emph{(partial nondeterministic) attributed tree transducer} (or $att$ for short) is a tuple $A=(S,I,\Sigma,\Delta,a_0,R)$,
	 where
	 \begin{itemize}
	 	\item  $S$ and $I$ are disjoint finite sets of
	 	\emph{synthesized attributes} and \emph{inherited attributes}, respectively,
	 	\item $\Sigma$ and $\Delta$ are ranked alphabets of \emph{input} and \emph{output symbols}, respectively,
	 	\item  $a_0\in S$ is  the \emph{initial attribute} and
	 	\item  $R=(R_\sigma\mid \sigma \in \Sigma\cup \{\#\})$ is a collection of finite sets of rules.
	 \end{itemize}
	We implicitly assume $atts$ to include a unique symbol $\#\notin \Sigma$ of rank $1$, the so-called
	\emph{root marker}, that only occurs at the root of trees.

	In the following, we define the rules of an $att$.
	Let $\sigma\in \Sigma$ be of rank $k\geq 0$. 
	Furthermore, let $\pi$ be a variable for nodes. Then the set $R_\sigma$ contains
	\begin{itemize}
		\item 	arbitrarily many rules of the form $a(\pi)\rightarrow \xi$ for every $a\in S$ and
		\item  arbitrarily many rules of the form $b(\pi i)\rightarrow \xi'$ for every $b\in I$ and $i\in [k]$,
	\end{itemize}
 where
	$\xi,\xi'\in T_\Delta [\{a'(\pi i) \mid a'\in S, i\in [k]\} \cup \{b'(\pi) \mid b'\in I\} ]$.
	We define the set $R_\#$ analogously with the restriction that $R_\#$ contains \emph{no}
	rules with synthesized attributes  on the left-hand side and  inherited attributes on the right-hand side.
	Replacing  `arbitrarily many rules' by `at most one rule' in the definition of the
	rule sets of $R$, we obtain the notion of a \emph{(partial) deterministic $att$} (or $datt$ for short).
	For the $att$ $A$ and the attribute $a\in S$, we denote by $\text{RHS}_A (\sigma, a(\pi))$
	the set of all right-hand sides of rules in $R_\sigma$ that are of the form
	$a(\pi)\rightarrow \xi$.
	For $b\in I$, the sets $\text{RHS}_A (\sigma, b(\pi i))$ with $i\in [k]$ and $\text{RHS}_A (\#, b(\pi 1))$ are defined analogously.\\
		
\textbf{Attributed Tree Translation.}\quad
We now define the semantics of $A$. 
Denote by $T_{\Sigma^\#}$ the set $\{\# (s) \mid s\in T_\Sigma\}$.
For a given tree $s\in T_{\Sigma^\#}$, we define
\[
\text{SI}(s)= \{\ta (v) \mid \ta\in S\cup I, v\in V(s)\}.
\]
Furthermore, we define that
 for the node variable $\pi$,
$\pi 0 = \pi$  and that for a node $v$,
$v.0=v$.
Let $t,t' \in T_\Delta [\text{SI}(s)]$. 
We write $t\Rightarrow_{A,s} t'$ if 
$t'$ is obtained from $t$
by substituting  a leaf of $t$ labeled by $\gamma (v.i)$ with $i=0$ if $\gamma\in S$ and $i>0$ if $\gamma\in I$
by 
$\xi [\pi\leftarrow v]$,
where $\xi\in \text{RHS}_A (s[v], \gamma (\pi i))$ and 
$[\pi\leftarrow v]$ denotes the substitution that replaces all occurrences
of $\pi$ by the node $v$. For instance, for $\xi_1=f(b(\pi))$ and $\xi_2=f(a(\pi 2))$ where $f$ is a symbol of
rank~$1$, $a\in S$
and $b\in I$, we have $\xi_1[\pi\leftarrow v]=f(b(v))$ and $\xi_2[\pi\leftarrow v]=f(a(v.2))$.
Denote by 
$\Rightarrow_{A,s}^+$ and
$\Rightarrow_{A,s}^*$ the transitive closure and the reflexive-transitive closure of $\Rightarrow_{A,s}$, respectively.

The 
\emph{translation realized by $A$}, denoted by $\tau_A$, is the relation
\[
\{(s,t) \in T_\Sigma \times T_\Delta\mid  a_0(1)\Rightarrow_{A,s^\#}^* t\},
\]
where subsequently $s^\#$ denotes the tree $\# (s)$.
If $\tau_A$ is a  partial function then we say that $A$ is a \emph{functional} $att$.
We define $\text{dom} (A)=\text{dom} (\tau_A)$ and $\text{range} (A)= \text{range} (\tau_A)$.

The reader may wonder why the definition of $A$ and its translations involves the root marker.
Informally, the root marker is a technical requirement without which many translations
are not possible.
To see which role the root marker plays in a translation, consider the following translation
which cannot be realized without the root marker.

\begin{example}\label{att example}
	Consider the $att$ $A=(S,I,\Sigma,\Delta,a,R)$ where 
	$\Sigma= \{f^2, e^0\}$ and $\Delta=\{d^1, e^0\}$.
	We define $S=\{ a\}$ and $I=\{b\}$.
	For the symbol $f$, we define 
	\[
	R_f=\{   \underbrace{a(\pi)  \rightarrow   d (a (\pi 2))}_{\rho_1} ,\  \underbrace{b(\pi 2) \rightarrow  a (\pi 1)}_{\rho_2},\  
	 \underbrace{b(\pi 1) \rightarrow b (\pi)}_{\rho_3} \
	 \}.
	\]
	Finally, the  rule set for the symbol $e$ and the root marker  are  given by 
	\[
	R_e=\{ \underbrace{a(\pi) \rightarrow d (b(\pi))}_{\rho_4}\}
	\quad\quad\text{and}\quad\quad
	R_{\#}=\{ \underbrace{b(\pi 1) \rightarrow e}_{\rho_5}\},
	\]
	respectively.
	The tree transformation realized by $A$ is defined as follows:
	On input $s\in T_\Sigma$, where  $s$ is of size $n$, $A$ outputs the tree $d^{n} (e)$.
	We denote by $d^n (e)$ the tree over $\Delta$ with exactly $n$ occurrences of $d$, e.g.,
	$d^4 (e)$ denotes the tree $d(d(d( d (e))))$.
	For instance,
	for $s=f(e,e) $, the $att$ $A$ outputs the tree $d^3 (e)$.
	The corresponding translation is shown in Figure~\ref{fig 1}.
\end{example}

\begin{figure}
	\centering
    \begin{minipage}{0.4\textwidth}
	\begin{tikzpicture}
\draw (0,0) node[circle,inner sep=0.5pt,draw] {$\#$};
\draw (0,-0.3)--(0,-0.7);
\draw (0,-1) node[circle,inner sep=1pt,draw] {$f$};
\draw (-0.2,-1.3)--(-1.5,-1.7);
\draw (-1.5,-2) node[circle,inner sep=2pt,draw] {$e$};
\draw (0.2,-1.3)--(1.5,-1.7);
\draw (1.5,-2) node[circle,inner sep=2pt,draw] {$e$};

	\filldraw [black] (0.7,-1) circle (1.2pt);
	\draw (1,-1) node {$a$};
	\filldraw [black] (-0.7,-1) circle (1.2pt);
	\draw (-1,-1) node {$b$};
	\filldraw [black] (-0.8,-2) circle (1.2pt);
	\filldraw [black] (-2.2,-2) circle (1.2pt);
	\filldraw [black] (0.8,-2) circle (1.2pt);
	\filldraw [black] (2.2,-2) circle (1.2pt);
	
		\draw (-0.7,-0.2) node {$e$};
		\draw[->] (-0.7,-0.5)--(-0.7,-0.8);
		\draw [->](-0.7,-1.2).. controls(-2.2,-1.5)..(-2.2,-1.8);
		\draw[->] (-2.2,-2.2).. controls(-1.5,-2.5)..(-0.8,-2.2);
		\filldraw[white] (-1.5,-2.5) circle (5pt);
		\draw (-1.5,-2.5) node {$d$};
		
		\draw [->](-0.8,-1.9).. controls(0,-1.6)..(0.8,-1.9);
		
		\draw [<-](2.2,-2.2).. controls(1.5,-2.5)..(0.8,-2.2);
		\filldraw[white] (1.5,-2.5) circle (5pt);
		\draw (1.5,-2.5) node {$d$};
		
		\draw [<-](0.8,-1.1).. controls(2.2,-1.5)..(2.2,-1.8);
		\filldraw[white] (1.7,-1.4) circle (5pt);
		\draw (1.7,-1.4) node {$d$};
	\end{tikzpicture}
    \end{minipage}
\quad\quad
    \begin{minipage}{0.5\textwidth}
	\begin{tikzpicture}	
\draw (0,0) node {$a(1)$};
\draw (1,0) node {$\Rightarrow$};
\draw (1,-0.25) node {$\rho_1$};

\draw (2.5,0) node {$d (a (1.2))$};
\draw (4,0) node {$\Rightarrow$};
\draw (4,-0.25) node {$\rho_4$};

\draw (5.5,0) node {$d^2 (b(1.2))$};

\draw (1,-1) node {$\Rightarrow$};
\draw (1,-1.25) node {$\rho_2$};

\draw (2.5,-1) node {$d^2 (a (1.1))$};
\draw (4,-1) node {$\Rightarrow$};
\draw (4,-1.25) node {$\rho_4$};

\draw (5.5,-1) node {$d^3 (b(1.1))$};

\draw (1,-2) node {$\Rightarrow$};
\draw (1,-2.25) node {$\rho_3$};

\draw (2.5,-2) node {$d^3 (b(1))$};
\draw (4,-2) node {$\Rightarrow$};
\draw (4,-2.25) node {$\rho_5$};

\draw (5.5,-2) node {$d^3 (e)$};
	\end{tikzpicture}
\end{minipage}
	\caption{The translation from  $f(e,e)$ to $d^3 (e)$
		defined by $A$ is pictured on the left. The corresponding transitions are displayed on the right.
		Each $\Rightarrow$ is annotated with the rule used in the corresponding transition step.
	}
	\label{fig 1}
\end{figure}
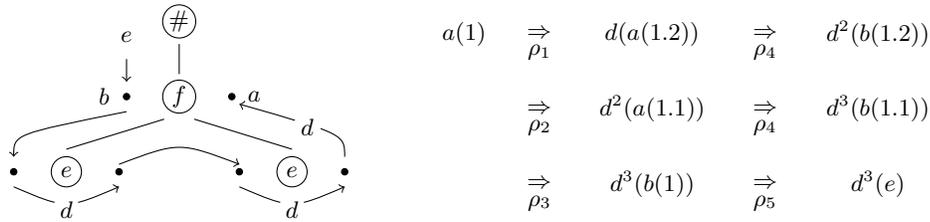

Note that by definition $atts$ are allowed to be \emph{circular}. 
We say that an $att$ $A$ is \emph{circular} if  $s\in T_{\Sigma}$, $\ta (v) \in \text{SI} (s^\#)$ and $t\in T_\Delta [\text{SI} (s^\#)]$  exists such that
$\ta (v) \Rightarrow_{A, s^\#}^+ t$ and $\ta (v)$ occurs in~$t$.\\

\textbf{Look-Ahead.}\quad To define \emph{attributed tree transducer with look-ahead}, we first define
\emph{bottom-up relabelings}.
A \emph{bottom-up relabeling} $B$ is a tuple $(P,\Sigma,\Sigma',F,R)$
where $P$ is the set of states, $\Sigma$ and $\Sigma'$ are finite ranked alphabets
and $F\subseteq P$ is the set of final states. 
For $\sigma\in \Sigma$ and $p_1,\dots,p_k\in P$,
the set $R$ contains 
at most one rule of the form
\[
\sigma (p_1 (x_1),\dots, p_k (x_k))\rightarrow p (\sigma' (x_1,\dots,x_k))
\]
where $p\in P$ and $\sigma'\in \Sigma'$.
The rules of $B$ induce a derivation relation $\Rightarrow_B^*$ defined inductively as follows: 
\begin{itemize}
	\item Let $\sigma\in \Sigma_0$ and $\sigma \rightarrow p (\sigma')$ be a rule in $R$.
	Then $\sigma \Rightarrow_B^*  p (\sigma')$.
	\item Let $s=\sigma (s_1,\dots, s_k)$ with $\sigma\in \Sigma_k$, $k>0$,
	and $s_1,\dots, s_k \in T_\Sigma$. 
	For $i\in [k]$, let $s_i \Rightarrow_B^* p_i (s_i')$. 
	Furthermore, let $\sigma (p_1 (x_1),\dots, p_k (x_k))\rightarrow p (\sigma' (x_1,\dots,x_k))$ be a rule in
	$R$. Then $s\Rightarrow_B^* p (\sigma' (s_1',\dots, s_k') )$.
\end{itemize}
For $s\in T_\Sigma$ and $p\in P$,
we write $s\in \text{dom}_B (p)$ if $s\Rightarrow_B^* p (s')$ for some  $s'\in T_{\Sigma'}$.
The translation realized by $B$ is 
\[
\tau_B= \{(s,s') \in T_\Sigma\times T_{\Sigma'} \mid s\Rightarrow_B^* p(s') \text{ with } p\in F\}.
\]

We define an \emph{attributed tree transducer with look-ahead (or $att^R$)}
as a pair
$\hat{A}=(B,A')$ where $B$ is a
bottom-up relabeling  and $A'$ is an $att$. 
Recall that translations are relations.
The translation realized by $\hat{A}$ is the composition of the translations realized by $B$ and $A'$, that is,
\[
\tau_{\hat{A}}=
\{ (s,t) \in T_\Sigma \times T_\Delta \mid (s,s')\in \tau_B (s) \text{ and } (s', t) \in \tau_{A'}  \}.
\]
Informally, $B$ preprocesses input trees for $A'$ with each node $v$, that is relabeled by $B$, providing $A'$ information about
the subtree rooted $v$. We say that $\hat{A}$ is a deterministic $att^R$ ($datt^R$) if $A$ is a $datt$.
The domain and the range of $\hat{A}$ are defined in the obvious way.\\

\textbf{Look-Around.}\quad
\emph{Look-around} is in concept similar to look-ahead; it is also a relabeling device that provides additional information to an $att$.
However, it is  far more expressive than a look-ahead\footnote{The concept of look-around is introduced in~\cite{DBLP:journals/jcss/BloemE00} where it is called look-ahead.}.

To define look-around, we first define  \emph{top-down relabelings}.
A top-down relabeling is a deterministic $att$ without inherited attributes
$T=(S,\emptyset,\Sigma ,\Sigma',a_0,R)$ where
for all  $\sigma\in \Sigma_k$, $k\geq 0$, the set $R_\sigma$
only contains rules of the form  $q (\pi)\rightarrow \sigma'(q_1(\pi 1),\dots,q_k (\pi k))$
with $\sigma'\in \Sigma'_k$.
Note that for a top-down relabeling, we will henceforth write
 $q(\sigma (x_1,\dots,x_k))\rightarrow \sigma'(q_1(x_1),\dots,q_k (x_k))$ instead
 $q (\pi)\rightarrow \sigma'(q_1(\pi 1),\dots,q_k (\pi k))\in R_\sigma$.
 We will also call the attributes of a top-down relabeling states.
 Since a top-down relabeling is an $att$, a \emph{top-down relabeling with look-ahead} is defined in the obvious way.

An \emph{attributed tree transducer with look-around} (or $att^U$)
is a tuple $\breve{A}=(U,A')$
where $A'$ is an $att$ and $U$ is a top-down relabeling with look-ahead.
The translation realized by $\breve{A}$
as well as the domain and the range of $\breve{A}$
 are defined analogously as for $att^R$.
This means that an $att^U$ relabels its input tree in two
phases: First the input tree is relabeled in a bottom-up phase. Then 
the resulting tree is relabeled again in a top-down phase before it is processed by $A'$.
We say that $\breve{A}$ is a deterministic $att^U$ ($datt^U$) if $A$ is a $datt$.
The following results hold.

\begin{proposition}\label{assumption}\cite[Lemma~27]{mypaper}
	Let $A$ be an  $att$. Then
	an equivalent $att$ $A'=(S,I,\Sigma,\Delta,a_0,R)$  can be constructed such that
	$R_\#$  contains no distinct rules with the same left-hand side.
\end{proposition}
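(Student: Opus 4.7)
The plan is to defer the nondeterministic choice of which root-marker rule fires for an inherited attribute $b$ from $R_\#$ (where we are forbidden to have multiple rules per LHS) into the sets $R_\sigma$ (where multiple rules per LHS are unrestricted). I do this by introducing one fresh synthesized attribute per inherited attribute of $A$ together with one fresh inherited attribute per original synthesized attribute, whose purpose is to ``transport'' synthesized attribute values at the root $1$ across the $\#$-boundary.

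Concretely, for each $b\in I$ that has at least one rule in $R_\#$, introduce a new synthesized attribute $c_b$, and for each $a\in S$ introduce a new inherited attribute $d_a$. Define $A'=(S\cup\{c_b\},I\cup\{d_a\},\Sigma,\Delta,a_0,R')$ where $R'_\#$ consists of the single rule $b(\pi 1)\to c_b(\pi 1)$ for each such $b$ together with the single rule $d_a(\pi 1)\to a(\pi 1)$ for each $a\in S$; note that every LHS now occurs in at most one rule of $R'_\#$. In each $R'_\sigma$ (for $\sigma\in\Sigma$) I would keep all original rules of $R_\sigma$ and additionally, for every original rule $b(\pi 1)\to\xi$ of $R_\#$, add the rule $c_b(\pi)\to\xi'$, where $\xi'$ is obtained from $\xi$ by replacing each occurrence of $a'(\pi 1)$ by $d_{a'}(\pi)$. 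This substitution is legal: by the restriction on $R_\#$, every such $\xi$ lies in $T_\Delta[\{a'(\pi 1)\mid a'\in S\}]$, so $\xi'$ lies in $T_\Delta[\{d_{a'}(\pi)\mid a'\in S\}]$, which is a permitted RHS for the synthesized attribute $c_b$.

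The equivalence $\tau_{A'}=\tau_A$ I would show by simulation in both directions. Given a derivation step $b(1)\Rightarrow_{A,s^\#}\xi^b_j[\pi\leftarrow\epsilon]$ at the root, replace it in $A'$ by the chain $b(1)\Rightarrow c_b(1)\Rightarrow(\xi^b_j)'[\pi\leftarrow 1]$ (using the chosen copy in $R'_{s[1]}$), followed by rewriting each resulting $d_{a'}(1)$ via the unique rule to $a'(1)$; the final form coincides with $\xi^b_j[\pi\leftarrow\epsilon]$. Conversely, the new attributes $c_b$ and $d_a$ occur on the RHS of no rule other than those just introduced, so the only instances ever produced during a derivation of $A'$ are $c_b(1)$ and $d_{a'}(1)$, and every such sub-derivation can be contracted back into a single application of an original rule of $R_\#$. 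In particular, no rules for $c_b$ or $d_a$ need to be added at non-root nodes.

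The subtlety I expect to be the main obstacle, and the reason a more naive construction fails, is the preservation of \emph{independent} choices at different occurrences of $b(1)$ in one intermediate tree. If $R_\#$ contains $b(\pi 1)\to e$ and $b(\pi 1)\to e'$ and some rule at the root uses $b(\pi)$ twice, then $A$ can produce mixed outputs such as $f(e,e')$, which a construction that commits once and for all at the root to a ``choice function'' $c\colon I\to\mathbb{N}$ (for instance by duplicating the entire attribute set per choice) would miss. The trick is that the single rule $b(\pi 1)\to c_b(\pi 1)$ in $R'_\#$ only renames $b(1)$ to $c_b(1)$ occurrence by occurrence, and the actual choice among the $\xi^b_j$ is made per occurrence by the several $c_b(\pi)\to(\xi^b_j)'$ rules in $R'_\sigma$; this is exactly where nondeterminism is permitted by the definition of an $att$, which is what makes the whole construction go through.
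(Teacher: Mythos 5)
The paper gives no proof of this proposition; it simply cites Lemma~27 of an external reference, so there is no in-paper argument to compare against, and I can only judge your construction on its own merits: it is correct. The fresh synthesized attributes $c_b$ defer the nondeterministic choice among the $R_\#$-rules for $b(\pi 1)$ to the child of the root marker, where multiple rules per left-hand side are permitted; the fresh inherited attributes $d_a$ with the single rules $d_a(\pi 1)\rightarrow a(\pi 1)$ carry the synthesized values back across the $\#$-boundary; and the resulting $R'_\#$ is well-formed (all right-hand sides use only synthesized attributes at $\pi 1$, as the definition of $R_\#$ requires) and has at most one rule per left-hand side. You also correctly isolate the essential point, namely that the choice must remain \emph{per occurrence} of $b(1)$: since the rule $b(\pi 1)\rightarrow c_b(\pi 1)$ only renames each occurrence, and each occurrence of $c_b(1)$ independently picks one of the rules $c_b(\pi)\rightarrow\xi'_j$, mixed choices such as your $f(e,e')$ example are preserved, which a global choice-function construction would indeed lose. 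Two details worth tightening: the backward inclusion $\tau_{A'}\subseteq\tau_A$ (contracting each block $b(1)\Rightarrow c_b(1)\Rightarrow\xi'[\pi\leftarrow 1]\Rightarrow^* \xi[\pi\leftarrow\epsilon]$ into one original $R_\#$-step) should be backed by a short induction or commutation argument on derivations, since steps at different attribute occurrences interleave arbitrarily; and your remark that adding the $c_b$-rules to every $R'_\sigma$ is harmless rests precisely on the observation that $c_b$ occurs on no right-hand side outside $R'_\#$, hence instances of $c_b$ (and likewise $d_a$) only ever appear at node $1$ — you state this, and it is the right justification.
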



\begin{proposition}\label{domain}\cite[Corollary~14]{DBLP:journals/acta/EngelfrietIM21}
	The domain of an $att^U$ is effectively regular.
\end{proposition}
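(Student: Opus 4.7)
The plan is to reduce the statement to two standard closure facts by observing that, by definition,
$\text{dom}(\breve{A}) = U^{-1}(\text{dom}(A'))$, where $A'$ is the underlying $att$ and $U$ is the top-down relabeling with look-ahead. So it suffices to prove (a) that $\text{dom}(A')$ is effectively regular for any (possibly circular nondeterministic) $att$ $A'$, and (b) that the inverse image of a regular tree language under $U$ is effectively regular.

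For part (a), I would construct a bottom-up tree automaton $M$ whose states record, for each input subtree $t = s/v$, the finite \emph{dependency relation}
$D_t \subseteq 2^I \times 2^S$
consisting of all pairs $(J, K)$ such that, assuming the inherited attributes in $J$ are available at the root of $t$, one can derive (using $A'$-rules restricted to nodes of $t$) ground output subtrees for exactly the synthesized attributes in $K$ at the root of $t$. Since $S$ and $I$ are finite, there are finitely many such relations, so $M$ has finitely many states; the transition on $\sigma(p_1,\ldots,p_k)$ is computed by a fixpoint over the rules in $R_\sigma$ that determines, for each subset of inherited attributes asked at the parent, which synthesized attributes become derivable. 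Finally I add a component simulating $R_\#$ and accept at the root exactly when $a_0$ is derivable from the empty set of inherited attributes at $\#$. Circularity is handled automatically because a cyclic dependency simply fails to contribute to the least fixpoint, which is the correct semantics — a cyclic instance of an attribute can never reach a ground output tree in finitely many $\Rightarrow_{A',s^\#}$-steps.

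For part (b), recall that $U$ is a top-down relabeling $T$ preceded by a bottom-up relabeling $B$ (the look-ahead), so $U^{-1}(L) = B^{-1}(T^{-1}(L))$. Top-down tree transducers are effectively closed under inverse image of regular tree languages by the standard product construction that simulates $T$ while carrying along the state of a bottom-up tree automaton recognizing $L$ on the projected output subtree (this simplifies further here because $T$ is a relabeling, hence linear and nondeleting). Bottom-up tree transducers — and in particular the deterministic bottom-up relabeling $B$ — are likewise effectively closed under inverse image of regular tree languages by a standard product with a bottom-up automaton for the language. Applying these two closure constructions in sequence to the regular language $\text{dom}(A')$ produced in part~(a) yields an effective bottom-up tree automaton for $\text{dom}(\breve{A})$.

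The main obstacle is part~(a): giving a clean argument that works uniformly for nondeterministic and circular $atts$, i.e.\ that the dependency-relation abstraction faithfully captures exactly the input trees admitting some terminating (acyclic) derivation $a_0(1) \Rightarrow_{A',s^\#}^* t$ with $t \in T_\Delta$. Phrasing the bottom-up computation as a least fixpoint over ``which $(J,K)$-pairs are achievable'' is what makes this go through, since it captures reachability in $\Rightarrow^*$ without mistakenly validating circular attribute instances. Parts (b) and the final composition are then routine.
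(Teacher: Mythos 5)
Your proposal is correct in substance, but it takes a different route from the paper: the paper gives no proof at all for this proposition, citing it as Corollary~14 of the Engelfriet--Inaba--Maneth paper (and, where it is used later for the $datt$ $D'$, also pointing to the F\"ul\"op--Maneth result on domains of partial attributed tree transducers). What you do instead is reprove the statement from scratch, and your decomposition $\text{dom}(\breve{A})=\tau_U^{-1}(\text{dom}(A'))$ together with the two ingredients is sound: part~(b) is the routine closure of regular tree languages under inverse images of (bottom-up and top-down) relabelings, and part~(a) is essentially the classical dependency-set construction, here phrased for nondeterministic, possibly circular $atts$. Two points deserve to be made explicit if you write this out in full. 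First, the yes/no ``availability'' abstraction for inherited attributes is sound only because the semantics $\tau_{A'}$ does not demand uniformity: distinct occurrences of the same attribute instance may be rewritten by different rules, so derivability-to-ground of an instance is a monotone AND--OR reachability property, and your least fixpoint computes exactly its well-founded solutions (this is also why circularity is harmless). Second, the compositional correctness of the state $f_t\colon 2^I\to 2^S$ needs a short argument that summarizing a subtree by this monotone function and re-solving a fixpoint at each node (and at the root marker, through $R_\#$) yields the same least fixpoint as the global instance-level system, i.e.\ a Beki\'c-style block-elimination argument; this is where the ``ping-pong'' between a subtree and its context is absorbed. With those two points spelled out, your argument is a correct, self-contained alternative to the citation; what the citation buys is brevity and a statement already covering look-around, while your construction makes the automaton and the effectivity explicit.
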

\section{Functional Compositions of $\mathbf{Atts^U}$ are Determinizable}
In the remainder of this paper, we prove the following statement:
For any composition of $n$ $atts^U$ that is functional an equivalent composition of $n$
$datts^U$ can be constructed.
The overall idea is similar to the one in~\cite{DBLP:journals/ipl/Engelfriet78} for top-down tree transducers.
More precisely, let $C$ be the composition of the $atts^U$ $A_1,\dots,A_n$.
Subsequently, we show that (a) for $i\in [n]$, a $datt^U$ $D_i$ can be constructed such that $D_i$
realizes a uniformizer of $A_i$ and (b) the composition of $D_1,\dots,D_n$ is equivalent to $C$.
Obviously our statement implies that for any functional $att^U$ there is an equivalent $datt^U$.

Before we prove our statement, we remark that
in the absence of look-around our statement does not hold, i.e.,
a functional composition of $n$ $atts$ without look-around cannot necessarily be
simulated by a composition of $n$ $datts$ without-look-around.
In particular, we show that
there are functional $atts$ which cannot be simulated by
 $datts$ without look-around.

\subsection{On the Necessity of Look-Around}
Subsequently, we show that there are tree translations  realizable by functional
$atts$  that cannot be realized by any
$datt$ even if the $datt$ uses look-ahead. 
Consider the following example.

\begin{example}\label{example 2}
	Let $\Sigma=\{f^2,g^2,e^0\}$ and let $\Delta=\{d^1, e^0\}$.
	Consider the following tree translation.
	Let $s\in T_\Sigma$ and
	let $v$ be the first node in reverse pre-order that is labeled by $g$ in~$s$.
	Note that this means that no proper ancestor of $v$ is labeled by $g$ in~$s$.
	If such a node $v$ does not exist, i.e., if no input node is labeled by $g$, then no output is produced.
	Otherwise, we 
	translate $s$ into $d^m (e)$, where $m$ denotes the \emph{size} of $s/v$.
	Recall that the size of$s/v$ is the number of its nodes and
	that $d^m (e)$ denotes the tree over $\Delta$ with exactly $m$ occurrences of $d$, e.g.,
	$d^3 (e)$ denotes the tree $d(d(d(e)))$.
	Hence,  the tree $f(g(f (e,e),e), f (e, g(e, g(e,e))))$ is for instance translated into  $d^5 (e)$.
	Clearly, this tree translation is functional and can be realized by the following functional $att$.
	
	Let $A=(S,I,\Sigma, \Delta, a, R)$ where $S=\{a, a_g\}$ and $I=\{b,b_g, b_g'\}$.
	For the root marker and the symbol $e$, we define
	\[
	R_{\#}=\{b_g' (\pi 1) \rightarrow e \} \quad\text{and}\quad R_e=\{ a (\pi) \rightarrow b(\pi),\  a_g (\pi)  \rightarrow d (b_g(\pi) \},
	\]
	respectively.
	For the symbol $f$, we define
	\[
	\begin{array}{c lcl c lcl c lcl}
	R_f=\{ &	a (\pi)  & \rightarrow &a (\pi 2) & \quad &	b(\pi 2) & \rightarrow &  a (\pi 1) & \quad & b(\pi 1) & \rightarrow & b (\pi) \\
	 &	a_g (\pi)  & \rightarrow & d (a_g (\pi 2)) & \quad &	b_g(\pi 2) & \rightarrow &  a_g (\pi 1) & \quad & b_g(\pi 1) & \rightarrow & b_g (\pi) \\
	&	b_g'(\pi 2) & \rightarrow &  b_g' (\pi) & \quad & b_g'(\pi 1) & \rightarrow & b_g' (\pi)\ \}.
	\end{array}
	\]
	Finally, the rules for the symbol $g$ are given by
	\[
\begin{array}{ c lcl c lcl c lcl}
	R_g=\{ &	a (\pi)  & \rightarrow & d (a_g (\pi 2) ) & \quad & a_g (\pi)  & \rightarrow & d (a_g (\pi 2) ) & \quad & b_g (\pi 2) & \rightarrow &  a_g (\pi 1) \\	
 &	b_g(\pi 1) & \rightarrow & b_g (\pi) & \quad & 	b_g(\pi 1) & \rightarrow & b_g' (\pi) \ \}.\\
\end{array}
\]	
	In short, the attributes $a$ and $b$ are used for traversing the input tree in reverse pre-order until a node labeled by $g$
	is found.
	In the case that such a node is found, the
	attributes with subscript $g$ produce the output tree.
	Consider in particular, the rules $b_g (\pi 1)\rightarrow b_g (\pi)$ and $b_g (\pi 1)\rightarrow b_g' (\pi)$ in $R_g$.
	With these rules,  $A$ guesses whether the node labeled by $g$ that it is currently processing has a
	proper ancestor labeled by $g$ or not.
	In particular, applying the  rule $b_g (\pi 1)\rightarrow b_g (\pi)$ means that $A$ guesses that such
	a proper ancestor exists, while applying $b_g (\pi 1)\rightarrow b_g' (\pi)$ means that $A$ guesses the opposite.
\end{example}

Though the tree translation in Example~\ref{example 2} can by realized by a functional $att$,
no $datt$ $D$ can realize it. In the following, we will explain (without a formal proof) why this is the case.
The key point 
is that 
$D$ is unable to determine for an input tree $s$ and a given node $v$
whether or not a proper ancestor $v'$ of $v$ exists that is also labeled by $g$
\emph{and} in the case that such a $v'$ exists, continue to output symbols afterwards.
To specify this statement consider the following.
Obviously,  $D$ visits every node for which it has to produce a symbol $d$.
W.l.o.g. it can be assumed that $D$ visits all such nodes in reverse pre-order.
Assume that $D$ has already produced an output symbol $d$ for every descendant of the node $v$
and that an attribute of $D$ is currently processing $v$.
Now $D$ must determine whether or not any more symbols $d$ need to be produced.
In other words, $D$ must determine whether or not an ancestor of $v$ is labeled by $g$ or not.
To do so $D$ 
traverses $s$ from $v$ `upwards' using rules of the form $\taa(\pi j) \rightarrow \taa' (\pi)$
where $\taa$ and $\taa'$ are some inherited attributes. 
Assume that this upwards traversal yields that a proper ancestor $v'$ of $v$ is labeled by $g$.
To realize the tree translation in  Example~\ref{example 2},
$A$ must produce a symbol $d$ for every descendant of $v'$.
Now, the question is has $D$ done so for every descendant of $v'$?
If not then $D$ needs to determine for which descendants it still has to produce output (see Figure~\ref{fig-1}).
To do so $D$ needs to return to the node $v$. However $D$ cannot memorize the node $v$
from which it started and thus does not know how to return to $v$.

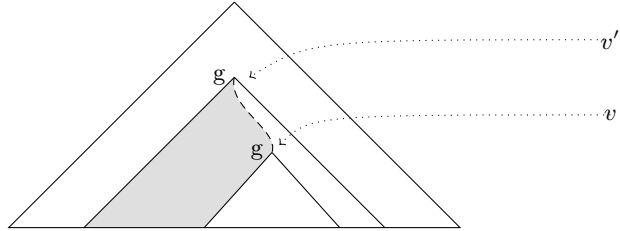
\begin{figure}[h!]
	\centering
\begin{tikzpicture}
	\fill[fill=gray!25]
	(0,-1) .. controls (-0.1,-1.4) and (0.6,-1.7) .. (0.5,-2)--(-0.4,-3)--(-2,-3)--(0,-1);
	
	\draw (0,0)--(-3,-3)--(3,-3)--(0,0);
	\draw (-0.2,-1) node {g};
	\draw (0,-1)--(2,-3);
	\draw (0,-1)--(-2,-3);
	\draw[densely dashed] (0,-1).. controls (-0.1,-1.4) and (0.6,-1.7) .. (0.5,-2);
	\draw (0.3,-2) node {g};
	\draw (0.5,-2)--(-0.4,-3);
	\draw (0.5,-2)--(1.4,-3);

	\draw[->,dotted] (5,-0.5) node{$v'$} .. controls (1,-0.5) .. (0.2,-1);
	\draw[->,dotted] (5,-1.5) node{$v$} .. controls (1,-1.5) .. (0.6,-1.9);

\end{tikzpicture}
\caption{
	Since $D$ visits nodes for which it has to produce an output symbol $d$ in reverse pre-order,
	$D$ must still produce a symbol
	$d$ for every node in the grayed section after
	determining that the ancestor $v'$ of $v$ is also labeled by $g$.}
\label{fig-1}
\end{figure}

Even giving $D$ access to look-ahead dos not remedy this problem.
Informally, this is because look-ahead can only provide
information about subtrees of $v$.
For the tree translation in Example~\ref{example 2} however, we require to know 
whether or not an ancestor of $v$ is labeled by~$g$. 

Observe that the knowledge
whether or not an ancestor of $v$ is labeled by $g$ in Example~\ref{example 2} can be acquired with regular look-around.

\begin{example}
	Let $U=(B,T)$ be a look-around where $B$ simply realizes the identity.
	Let $T'$ contain the rules
	\[
	\begin{array}{lclclcl}
		q(f (x_1, x_2)) & \rightarrow & f(q(x_1),q(x_2) ) &\quad & q'(f' (x_1, x_2))& \rightarrow & f'(q'(x_1),q'(x_2) )\\
		q(g (x_1, x_2))& \rightarrow  & g'(q'(x_1),q'(x_2) ) &\quad& 	q'(g' (x_1, x_2))& \rightarrow & g'(q'(x_1),q'(x_2) )\\
		q(e) & \rightarrow & e &\quad& q'(e') & \rightarrow & e'
	\end{array}
	\]
	Thus,
	all nodes $v$ of $s$ that are  either labeled by $g$ or are a descendant of such a node
	have a prime added to their label by $U$.
\end{example}

It should be clear that a  $datt^U$ $D$ using the relabeling $U$ in the previous example
can realize the tree translation in Example~\ref{example 2}.
In particular, to test whether or not an ancestor of a given node $v$ is labeled by $g$,
$D$ simply needs to test whether or not the parent node of $v$ is labeled by a symbol with a prime.

For completeness, note that the classes of tree translations realized by
functional $att$ and $datt^R$ are in fact incomparable as
shown in the following example.

\begin{example}\label{example 3}
	Consider the following tree translation: Let $L$ be a regular tree language over $\Sigma$.
	Let $\Delta$ consists of two symbols $'y'$ and $'n'$ that are both of rank $0$ and let
	$s\in T_\Sigma$.
	If $s\in L$ that we translate $s$ into $'y'$ otherwise $s$ is translated into $'n'$.
	Since  $L$ is a regular tree language, a deterministic bottom-up automaton $B$ accepting $L$
	exists. Using $B$ we can construct a  $datt$ with look-ahead that realizes
	the  translation above in a straight-forward manner. 
	
	Now consider a nondeterministic functional $att$ $A$. Before $A$ produces any output, it must check whether 
	its input tree $s$ is an element of $L$ or not. Specifically,  
	$A$ has to check whether $s\in L$ in a tree-walking-fashion, i.e.,
	$A$ tests $s$ like a \emph{tree-walking automaton} would. It is well known that
	tree-walking automata do not recognize all regular tree languages~\cite{DBLP:journals/siamcomp/BojanczykC08}.
	Thus, there are regular tree languages $L$ for which no functional $att$ realizing the above translation exists.
\end{example}

Denote by  $\mathcal{F}$
the class all functions. Furthermore, denote by $ATT^R$  and  $dATT^R$ the classes of tree translations realizable by  nondeterministic $atts^R$ and deterministic $atts^R$, respectively. Define $ATT$  and  $dATT$ analogously.
Examples~\ref{example 2} and~\ref{example 3} yield the following.

\begin{proposition}
	The following statements hold:
	\begin{enumerate}
		\item 	$dATT \subsetneq  ATT \cap \mathcal{F}$ and  $dATT^R \subsetneq  ATT^R \cap \mathcal{F}$.
		\item 	The classes $ATT \cap \mathcal{F}$ and $dATT^R$ are incomparable.
	\end{enumerate}
\end{proposition}

\section{Constructing a Uniformizer for an $\mathbf{Att}$}\label{construction}
Recall that before we can prove that for a given functional composition of $n$ $atts^U$
an equivalent composition of $n$ $datts^U$ can be constructed, 
we  prove that for a given $att^U$ $\hat{A}=(U,A)$,
a $datt^U$ $\hat{D}$ can be constructed such that $\tau_{\hat{D}}$ is a uniformizer of
$\tau_{\hat{A}}$. For the latter, we first show how to construct a $datt^U$ $D$ that realizes a uniformizer of $\tau_A$.

Subsequently, let $A=(S,I,\Sigma,\Delta,a_0, R)$ be  fixed.
Note that we allow the $att$~$A$ to be circular.

Before we construct the $datt^U$ $D$, we introduce the following definitions.
Let $s\in \text{dom} (A)$.
Let $t_1,\dots, t_n$ be trees in $T_\Delta [\text{SI} (s^\#)]$ and let
\[
\tau=(t_1 \Rightarrow_{A,s^\# } t_2 \Rightarrow_{A,s^\# }\cdots \Rightarrow_{A,s^\# } t_n).
\]
We call $\tau$ a  \emph{translation of $A$ on input $s$} if $t_1=a_0 (1)$ and $t_n\in T_\Delta$.
If  $\tau$ is a  translation of $A$ on input $s$, then the \emph{output of $\tau$} is $t_n$.
Furthermore, 
we say that $\tau$ contains a \emph{productive cycle} if $i<j\leq n$, a node $u$ and a proper descendant $u'$ of $u$
as well as $\ta (\nu) \in \text{SI} (s^\#)$ exist such that
$
t_i [u]=t_j[u']=\ta (\nu).
$
On the other hand, if  $i<\iota<j$  and a node $u$ exist such that
$
t_{i}[u]=t_{j}[u]=\ta (\nu) \quad\text{but}\quad t_{\iota}[u]\neq \ta (\nu),
$
then we say that  $\tau$ contains a \emph{non-productive cycle}.
We say that $\tau$ is \emph{cycle-free} if $\tau$ does not contain a cycle of either type.

\begin{observation}\label{observation 1}
	Let $s\in \text{dom} (A)$. Then a cycle free translation $\tau$ of $A$ on input $s$ exists.
\end{observation}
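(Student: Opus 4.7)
The plan is to argue by minimality. Since $s\in\text{dom}(A)$, some translation of $A$ on $s$ exists, so I may pick one $\tau=(t_1\Rightarrow_{A,s^\#}\cdots\Rightarrow_{A,s^\#}t_n)$ whose length $n$ is minimum among all translations of $A$ on $s$. I claim such a $\tau$ is cycle-free, and I would prove this by showing that the existence of any cycle in $\tau$ yields a strictly shorter translation of $A$ on $s$, contradicting minimality.

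The main structural tool is the \emph{derivation tree} $D_\tau$ associated with $\tau$: its nodes are the steps of $\tau$, step $1$ is the root (expanding $a_0(1)$), and the children of step $k$ are the later steps expanding the attribute leaves that are introduced by the right-hand side used in step $k$. Two observations are central. First, a linear order on the nodes of $D_\tau$ constitutes a valid translation iff it is a topological order of $D_\tau$. Second, the rules of $A$ depend only on the input node $\nu$ referenced by an attribute instance $\ta(\nu)$ and the label $s^\#[\nu]$, not on the position in the intermediate tree where that instance currently appears; consequently, any subtree of $D_\tau$ describing the full expansion of one occurrence of $\ta(\nu)$ at a position $u'$ can be ``relocated'' to expand another occurrence of $\ta(\nu)$ at position $u$, simply by replacing the prefix $u'$ by $u$ in every rule application of that subtree.

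For a productive cycle $t_i[u]=t_j[u']=\ta(\nu)$ with $u'$ a proper descendant of $u$ and $i<j$, let $v_u$ and $v_{u'}$ be the nodes of $D_\tau$ that expand these two occurrences. Since the occurrence at $u'$ in $t_j$ was produced during the expansion rooted at $v_u$, the node $v_{u'}$ is a strict descendant of $v_u$. I would replace the subtree of $D_\tau$ rooted at $v_u$ by the subtree rooted at $v_{u'}$, relocating the latter so that each prefix $u'$ becomes $u$. The linearisation of the modified derivation tree is then a valid translation of $A$ on $s$ ending in a tree of $T_\Delta$ (intuitively, $t_n$ with $t_n/u$ replaced by $t_n/u'$), and it uses strictly fewer steps, since at least $v_u$ has been removed.

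For a non-productive cycle $t_i[u]=t_j[u]=\ta(\nu)$ with $t_\iota[u]\neq\ta(\nu)$ for some $i<\iota<j$, I would first observe that the label at a given position can only be changed by expanding the leaf at that position, because every proper ancestor of a leaf carries an output symbol from $\Delta$ and is therefore never substituted. Moreover, once the label at $u$ becomes an output symbol it never reverts to an attribute, so every expansion at $u$ between steps $i$ and $j$ must use a rule whose right-hand side is a single attribute instance. These expansions form a chain $\ta(\nu)\to\beta_1(\nu_1)\to\cdots\to\ta(\nu)$ at $u$ of length at least two. Removing the entire chain from $\tau$ preserves the validity of every remaining step (each operates on a position incomparable with $u$, and is thus unaffected by the label at $u$), while the eventual expansion of $\ta(\nu)$ at $u$ present after step $j$ in $\tau$ can still be applied unchanged. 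The result is a strictly shorter translation, again contradicting minimality.

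The main obstacle is to make the relocation argument in the productive case fully rigorous: defining $D_\tau$ cleanly (in particular that ``children'' are well-defined given that several copies of the same attribute may be created by one rule), verifying that the relocated subtree interleaved with the unchanged outside steps still yields valid steps of $\Rightarrow_{A,s^\#}$, and checking that any topological linearisation terminates in $T_\Delta$. The non-productive case, by contrast, rests entirely on the simple ``absorbing'' property of output symbols, which should be routine once spelled out.
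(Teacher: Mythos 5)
The paper states Observation~\ref{observation 1} without any proof, so there is no official argument to compare against; your minimality argument is the natural way to supply one, and it is correct. Both surgeries are sound. In the productive case, since $u$ is a leaf of $t_i$ (attribute instances label only leaves), the position $u'$ can only come into existence through the expansion of that leaf, so $v_{u'}$ is indeed a proper descendant of $v_u$ in the derivation tree; the replaced subtree therefore loses at least the node $v_u$, and the relocation is legitimate because $\text{RHS}_A(s^\#[\nu],\ta(\pi i))$ depends only on the attribute and on the input node $\nu$, never on the position in the intermediate output tree. In the non-productive case your ``absorbing output symbol'' observation is exactly what forces every expansion at $u$ between steps $i$ and $j$ to use a single-attribute right-hand side, and since the label at $u$ changes and then changes back, this chain has length at least two, so deleting it strictly shortens the translation. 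The one point you should make explicit is the linearisation claim in the productive case: you do not need arbitrary topological orders --- keeping the surviving steps in their original relative order (with the relocated steps in the positions of their originals) already yields a valid derivation, because every surviving step outside the excised subtree acts at a position incomparable with $u$ and is unaffected by what happens below $u$. With that spelled out, the proof is complete.
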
 

Note that if $A$ is noncircular then any  translation of $A$ on input $s$ is cycle-free.
Let $\tau$ be a cycle-free translation of $A$ on input $s$.
Then clearly,
multiple instances of $\ta' (\nu') \in \text{SI} (s^\#)$ may occur
in $\tau$. This means that
since $A$ is nondeterministic, at distinct instances of $\ta' (\nu')$ distinct rules may be applied.
In the following, we say that $\tau$ is 
\emph{uniform} if at all such instances of $\ta' (\nu')$ the same rule is applied.
With Observation~\ref{observation 1}, it is easy to see that the following holds.

\begin{observation}\label{observation 2}
	Let $s\in \text{dom} (A)$. Then a cycle free translation $\tau$ of $A$ on input $s$ that is also uniform exists.
\end{observation}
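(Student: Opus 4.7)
By Observation~\ref{observation 1} fix a cycle-free translation $\tau=(t_1\Rightarrow_{A,s^\#}\cdots\Rightarrow_{A,s^\#}t_n)$ of $A$ on $s$. The plan is to read off from $\tau$ a single rule per attribute instance and then to replay the derivation using this fixed choice. Concretely, for each $\alpha(\nu)\in\text{SI}(s^\#)$ that is rewritten at some step of $\tau$, let $L(\alpha(\nu))$ be the \emph{largest} index $i$ such that a copy of $\alpha(\nu)$ is substituted in the transition $t_i\Rightarrow_{A,s^\#} t_{i+1}$, and let $\phi(\alpha(\nu))$ be the rule applied at that step. The uniform translation $\tau'$ will then be obtained by starting from $a_0(1)$ and repeatedly applying $\phi(\alpha(\nu))$ to each unresolved occurrence of $\alpha(\nu)$ in, say, leftmost-outermost order; uniformity is built into this definition.

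Termination will be established via an auxiliary digraph $G_\phi$ whose vertices are the attribute instances in the domain of $\phi$ and whose edges $\alpha(\nu)\to\alpha'(\nu')$ record that $\alpha'(\nu')$ occurs in the right-hand side of $\phi(\alpha(\nu))$ after the substitution $\pi\leftarrow\nu$. The key observation is that $L$ strictly increases along edges of $G_\phi$: when $\phi(\alpha(\nu))$ is applied at step $L(\alpha(\nu))$, it introduces a fresh copy of $\alpha'(\nu')$ in $t_{L(\alpha(\nu))+1}$, and this copy must itself be resolved at some strictly later step because $t_n\in T_\Delta$. Hence $L(\alpha'(\nu'))>L(\alpha(\nu))$, so $G_\phi$ is a finite DAG, the $\phi$-expansion from $a_0(1)$ has bounded depth, and the construction of $\tau'$ halts with a tree in $T_\Delta$.

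It remains to verify that $\tau'$ is cycle-free, which I would do by lifting each of the two cycle conditions to $G_\phi$. For a productive cycle $t_i'[u]=t_j'[u']=\alpha(\nu)$ with $u'$ a proper descendant of $u$, I would trace the chain of $\phi$-substitutions that creates the positions from $u$ down to $u'$ together with the subsequent relabelings at position $u'$ until its label reads $\alpha(\nu)$ again; concatenating the edges of $G_\phi$ arising along this chain gives a closed walk $\alpha(\nu)\to\beta_1(\mu_1)\to\cdots\to\alpha(\nu)$, contradicting acyclicity. For a non-productive cycle the successive labels at a fixed position $u$ form a sequence $\alpha(\nu),\beta_1(\mu_1),\dots,\alpha(\nu)$, and each change is realized by applying $\phi$ to the current label, so each transition is itself an edge in $G_\phi$ and again yields a cycle.

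The main obstacle is the choice of $\phi$: the whole argument hinges on $G_\phi$ being acyclic, and defining $\phi(\alpha(\nu))$ from the \emph{first} rather than the \emph{last} resolution in $\tau$ does not give the required monotonicity of $L$ and can easily leave cycles in $G_\phi$. Picking the last resolution is the crucial combinatorial choice that forces $L$ to strictly increase along edges and thereby powers both the termination step and the cycle-freeness check.
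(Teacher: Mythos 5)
Your argument is correct, and it is worth noting that the paper itself offers no proof of this observation (it is merely asserted to follow easily from Observation~\ref{observation 1}), so there is no official argument to compare against; yours is a legitimate way to fill the gap. The crucial points all check out: because every attribute-instance leaf introduced at step $L(\alpha(\nu))$ must itself be rewritten before the derivation reaches the ground tree $t_n\in T_\Delta$, the function $L$ strictly increases along edges of $G_\phi$, so $G_\phi$ is a DAG; the replay therefore terminates, is uniform by construction, and any productive or non-productive cycle in the replayed derivation would unfold into a closed positive-length walk in $G_\phi$ (using that a node's label can only change when that node is itself rewritten as a leaf, and that output-labelled nodes are never relabelled), contradicting acyclicity. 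You are also right that taking the \emph{last} rather than the first resolution is what makes the monotonicity work. Two small remarks: first, your proof never actually uses cycle-freeness of the initial translation $\tau$ --- any translation of $A$ on input $s$ would do, since $L$-monotonicity only needs $t_n\in T_\Delta$ --- so your argument simultaneously establishes Observation~\ref{observation 1} and is in that sense stronger than what is asked; second, the cycle-lifting step is only sketched ("I would trace\dots") and should be written out, in particular the bookkeeping that the ancestor chain of occurrences creating the position $u'$ must pass through the occurrence $(u,\alpha(\nu))$ at time $i$, which is what closes the walk in the productive case.
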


\begin{example}\label{running example}
	Consider the following $att$ $A=(S,I,\Sigma,\Delta, a, R)$ where $S=\{a,a'\}$ and $I=\{b\}$. Let
	$\Sigma=\{ f^1, g ^1, h^1,e ^0\}$ and $\Delta= \{g^2, g'^2, h^1, e^0\}$.

	The rules for the root marker a given by
	$R_\#=\{ b (\pi 1) \rightarrow a' (\pi 1) \}$
	while the rules for the symbol $e$ are given by
	$R_e=\{ a (\pi) \rightarrow  b (\pi), a' (\pi) \rightarrow e \}$.
	Additionally, for the symbol $f$, we define
	\[
	\begin{array}{cc rll c rll c rll  c rll}
		R_f &=\{ & a (\pi) & \rightarrow & a (\pi 1) &\quad &a' (\pi 1) &\rightarrow & a (\pi 1) & \quad &a' (\pi) &\rightarrow & e &\quad & b (\pi 1) &\rightarrow & b (\pi) \ \}  \\
	\end{array}
	\]
	while for the symbol $g$, we define
	\[
	\begin{array}{cc rll c rll c rll  }
		R_g &=\{ & a (\pi) & \rightarrow & g (a (\pi 1),a(\pi 1)) &\quad &a' (\pi 1) &\rightarrow &  g (a (\pi 1),a(\pi 1)) & \quad &a' (\pi) &\rightarrow & e\\
		& & a (\pi) & \rightarrow & g' (a (\pi 1),a(\pi 1)) & \quad &  a' (\pi 1) &\rightarrow &  g' (a (\pi 1),a(\pi 1)) & \quad &b (\pi 1) &\rightarrow & b (\pi) \ \}.  \\
	\end{array}
	\]
	Finally, for the symbol $h$, we define
	\[
	\begin{array}{cc rll c rll c rll  c rll }
		R_h &=\{ & a (\pi) & \rightarrow & h (a (\pi 1)) &\quad &a' (\pi 1) &\rightarrow &  h (a (\pi 1)) & \quad &a' (\pi) &\rightarrow & e &\quad & b (\pi 1) &\rightarrow & b (\pi) \ \}. \\
	\end{array}
	\]
	Consider the trees $s_1= f(e)$, $s_2=h (e)$
	and $s_3=g (g (e))$.
	Consider the  translation 
	\[
	\begin{array}{lllllll lllll}
		a(1) & \Rightarrow_{A, s_1^\#} & a(1.1) &  \Rightarrow_{A, s_1^\#} & b(1.1) & \Rightarrow_{A, s_1^\#} & b(1) 
		&\Rightarrow_{A, s_1^\#} & a' (1) \\
		& \Rightarrow_{A, s_1^\#} & a(1.1) &  \Rightarrow_{A, s_1^\#} & b(1.1) 
		&\Rightarrow_{A, s_1^\#} & b(1)  & \Rightarrow_{A, s_1^\#} & a' (1)& \Rightarrow_{A, s_1^\#} & e
	\end{array}
	\]
	of $A$ on input $s_1$.
	Clearly this translation contains a non-productive cycle. Analogously, the translation 
	\[
	\begin{array}{lllllll}
		a(1) & \Rightarrow_{A, s_2^\#} & h (a(1.1))  & \Rightarrow_{A, s_2^\#}  & h(b(1.1)) & \Rightarrow_{A, s_2^\#} & h (b(1))\\
		&\Rightarrow_{A, s_2^\#} & h (a' (1)) &  \Rightarrow_{A, s_2^\#} & h (h (a(1.1))) & \Rightarrow_{A, s_2^\#}  &h (h (b(1.1))) \\
		&\Rightarrow_{A, s_2^\#} & h (h (b(1))) & \Rightarrow_{A, s_2^\#} & a' (1) & \Rightarrow_{A, s_2^\#} & h(h(e))
	\end{array}
	\]
	of $A$ on input $s_2$ contains a productive cycle.
	
	Tress that $s_3$ can be translated into include  $t_1= g (g(e,e), g'(e,e))$,  $t_2=  g' (g'(e,e), g'(e,e))$ and $t_3= g (g(e,e), g(e,e))$.
	It can be verified that the translations on input $s_3$ that output these trees are all cycle-free.
	Note however that the translation that outputs $t_1$ is however not uniform.
	In particular, two instances of $a(2)$ occur in this translation and while at one instance of $a(2)$ the rule $ a (\pi)  \rightarrow  g (a (\pi 1),a(\pi 1))$ is applied,
	the rule applied at the the other instance is $a (\pi)  \rightarrow  g' (a (\pi 1),a(\pi 1))$.
	The translations of $A$ on input $s_2$ that output~$t_2$ and~$t_3$ on the other hand are uniform.
\end{example}

Subsequently, denote by $\breve{\tau}_A$ the set of all pairs $(s,t) \in \tau_A$ for which
a uniform translation $\tau$ exists whose input is $s$ and whose output is $t$.
By definition of $\breve{\tau}_A$ and due to Observation~\ref{observation 2}, it should be clear that the following holds.

\begin{lemma}\label{uniform translation}
	$\breve{\tau}_A \subseteq \tau_A$ and 	$\text{dom} (\breve{\tau}_A) = \text{dom} (\tau_A)$. 
\end{lemma}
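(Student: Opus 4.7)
The first inclusion, $\breve{\tau}_A \subseteq \tau_A$, is immediate from the definition: $\breve{\tau}_A$ is introduced as the set of those pairs $(s,t) \in \tau_A$ that admit some uniform derivation, so it is by construction a subset of $\tau_A$. As a consequence, $\text{dom}(\breve{\tau}_A) \subseteq \text{dom}(\tau_A)$.

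For the reverse inclusion $\text{dom}(\tau_A) \subseteq \text{dom}(\breve{\tau}_A)$, the plan is simply to invoke Observation~\ref{observation 2}. Given $s \in \text{dom}(\tau_A)$, that observation guarantees the existence of a translation $\tau$ of $A$ on input $s$ that is both cycle-free and uniform. By the definition of ``translation of $A$ on input $s$'', its output $t$ lies in $T_\Delta$ and satisfies $a_0(1) \Rightarrow_{A,s^\#}^\ast t$, so $(s,t) \in \tau_A$; since $\tau$ is uniform, we moreover have $(s,t) \in \breve{\tau}_A$ by the definition of the latter. Hence $s \in \text{dom}(\breve{\tau}_A)$, as required.

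There is essentially no obstacle at this point: the lemma is a bookkeeping statement, and the entire substance has been pushed into Observation~\ref{observation 2}. The only thing worth double-checking while writing it up is that ``uniform'' in the definition of $\breve{\tau}_A$ matches the notion used in Observation~\ref{observation 2} (applying the same rule at every occurrence of the same $\alpha'(\nu')$ throughout the derivation), and that the existence statement of Observation~\ref{observation 2} already delivers a derivation whose final tree lies in $T_\Delta$ (so that the pair $(s,t)$ can legitimately be placed in $\tau_A$). Both checks are immediate from the definitions given just above the lemma.
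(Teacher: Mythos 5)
Your proposal is correct and matches the paper's argument exactly: the paper also obtains $\breve{\tau}_A \subseteq \tau_A$ directly from the definition of $\breve{\tau}_A$ and derives the domain equality from Observation~\ref{observation 2}. Nothing further is needed.
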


We remark that $\breve{\tau}_A$ is not necessarily a uniformizer of $\tau_A$ because $\breve{\tau}_A$
is not necessarily a function.

In the following,
we  construct a $datt^U$ $D$ that realizes a uniformizer of $\tau_A$ on the basis of
uniform translations.
More precisely, to construct $D$ we proceed as follows; 
First, we  show that given $A$,
a nondeterministic top-down relabeling $T$ and a $datt$ $D'$ can be constructed such that
the composition of $T$ and $D'$ simulates all uniform translations of $A$.
In particular, denote by $\tau_C$ the tree translation realized by the composition of 
$T$ and  $D'$, that is, 
\[
\tau_C= \{ (s,t) \mid (s,s')\in \tau_T \text{ and } (s',t) \in \tau_{D'} \}.
\]
We can construct  $T$ and $D'$ such that $\tau_C = \breve{\tau}_A$.
Given $T$ and $D'$, we show how to construct a $datt^U$ $D$ such that $D$ realizes a uniformizer  
of $\tau_C$. Due to Lemma~\ref{uniform translation}, it follows that the $datt^U$ $D$ also realizes a uniformizer  
of $\tau_A$.

Before constructing $T$ as well as $D'$ we introduce the following definition.
Consider the set $R_\sigma$, i.e., the set of all rules of $A$ for the symbol $\sigma$.
Let $\bar{R} \subseteq R_\sigma$. We call the set $\bar{R}$ \emph{unambiguous} if 
$\bar{R}$ does not contain two distinct rules with the same left-hand side.\\ 

\noindent
\textbf{Idea of $\mathbf{T}$ and $\mathbf{D'}$.}\quad
In the following we describe our idea: 
To simulate uniform translations of $A$, the nondeterministic top-down relabeling $T$ determines for every input node $v$ which rules 
will be used at the node $v$. The $datt$ $D$ then only has to apply these rules.

More precisely, denote by $\tilde{\Sigma}$ the output alphabet of $T$ and the input alphabet of $D'$.
The alphabet $\tilde{\Sigma}$ consists of symbols of the form $\langle \sigma,\bar{R} \rangle$
where $\sigma\in \Sigma$ and $\bar{R}$ is an unambiguous subset of $R_\sigma$.
The symbol  $\langle \sigma,\bar{R} \rangle$ is of rank $k$ if $\sigma$ is.
By relabeling a node by $\langle \sigma,\bar{R} \rangle$, $T$ signals to $D'$ that only rules in $\bar{R}$
are allowed to be used at that node.
Note that since $\bar{R}$ is an unambiguous subset of $R_\sigma$, it follows easily that if
$(s,\tilde{s}) \in \tau_T$ and $\tau'$ is a translation of $D'$ on input $\tilde{s}$,
then $\tau'$ simulates a translation $\tau$ of $A$ on input $s$ that is uniform.\\

\noindent
\textbf{Construction of $\mathbf{T}$.}\quad
Subsequently, we define the nondeterministic top-down relabeling 
$T=(\{q\},\emptyset,\Sigma, \tilde{\Sigma}, q, \tilde{R})$.
To do so, all we need to do is to define the set $\tilde{R}$.
Specifically, we define that
\[
q (\sigma (x_1,\dots,x_k) )\rightarrow \langle \sigma, \bar{R} \rangle (q (x_1),\dots,q (x_k)) \in \tilde{R}
\]
for all $\sigma\in \Sigma_k$, $k\geq 0$ and for all unambiguous $\bar{R} \subseteq R_\sigma$.\\

\noindent
\textbf{Construction of  $\mathbf{D'}$.}\quad
	The $datt$ $D'$ is constructed in a straight-forward manner.
	In particular, we define $D'=(S,I,\tilde{\Sigma},\Delta,a_0, R')$.
	
	Recall that due to Proposition~\ref{assumption}, we can assume that $R_\#$ is deterministic, i.e., the set of rules of
	$A$ for the root marker,  contains no distinct rules with the same left-hand side. In other words, we can assume that
	$R_\#$ is unambiguous. Therefore, we define  $R'_\# = R_\#$.
	
	Recall that $\tilde{\Sigma}$ consists of symbols of the form $\langle \sigma,\bar{R} \rangle$
	where $\sigma \in \Sigma$ and    $\bar{R}$ is an unambiguous subset of $R_\sigma$.
	For the symbol $\langle \sigma,\bar{R} \rangle$ we define
	$R'_{\langle \sigma,\bar{R} \rangle} = \bar{R}$.
	Since by definition of the alphabet $\bar{\Sigma}$, the set $\bar{R}$ is unambiguous it should be clear that $D'$
	is deterministic.

\begin{example}\label{running example 2}
	Consider the $att$ $A$ in Example~\ref{running example}.
	Consider the symbol $f$ and the set of rules $R_f$ of $A$. Then the unambiguous subsets of $R_f$ are
	\[
	\begin{array}{l c l}
		 \{a (\pi) \rightarrow a (\pi 1),\  b (\pi 1) \rightarrow b (\pi),\ a' (\pi 1) \rightarrow  a (\pi 1)\} & \quad &	\{b (\pi 1) \rightarrow b (\pi),\ a' (\pi 1) \rightarrow  e\}\\
		\{a (\pi) \rightarrow a (\pi 1),\  b (\pi 1) \rightarrow b (\pi),\ a' (\pi 1) \rightarrow  e\} & \quad & \{a (\pi) \rightarrow a (\pi 1)\}\\
		 \{a (\pi) \rightarrow a (\pi 1),\ a' (\pi 1) \rightarrow  a (\pi 1)\} & \quad & \{a' (\pi) \rightarrow  e\}\\
		 \{a (\pi) \rightarrow a (\pi 1),\ a' (\pi 1) \rightarrow  e\} & \quad & \{ a' (\pi 1) \rightarrow  a (\pi 1)\}\\
		 \{a (\pi) \rightarrow a (\pi 1),\ b (\pi 1) \rightarrow b (\pi)\} & \quad &  \{b (\pi 1) \rightarrow b (\pi)\}\\
		\{b (\pi 1) \rightarrow b (\pi),\ a' (\pi 1) \rightarrow  a (\pi 1)\} & \quad &\emptyset. 
	\end{array}
	\]
	To construct the nondeterministic top-down relabeling $T=(\{q\},\emptyset,\Sigma, \tilde{\Sigma}, q, \tilde{R})$ from  $A$,  we define for each of these  subsets $\bar{R}$ an output symbol of the form  
	$\langle f,\bar{R} \rangle$ and a
	rule of the
	form $q (f (x_1)) \rightarrow \langle f, \bar{R} \rangle (q (x_1))$.
	For instance for the sets $\{a (\pi) \rightarrow a (\pi 1),\ b (\pi 1) \rightarrow b (\pi)\}$ and $\{a (\pi) \rightarrow a (\pi 1)\}$,
	we define the symbols
	\[
	\langle f, \{a (\pi) \rightarrow a (\pi 1),\ b (\pi 1) \rightarrow b (\pi)\} \rangle\quad \text{and}\quad
	\langle f, \{a (\pi) \rightarrow a (\pi 1)\}\rangle
	\]
	along with the rules
	\[
	q (f (x_1)) \rightarrow \langle f, \{a (\pi) \rightarrow a (\pi 1),\ b (\pi 1) \rightarrow b (\pi)\} \rangle (q (x_1))
	\]
	and 
	\[
	q (f (x_1)) \rightarrow \langle f, \{a (\pi) \rightarrow a (\pi 1)\}\rangle (q (x_1)),
	\] 
	respectively. For the rule sets $R_g$, $R_h$ and $R_e$ of $A$, we proceed analogously.
	
	To construct of the $datt$ $D'=(S,I,\tilde{\Sigma},\Delta,a_0, R')$ from  $A$, we define
	 for the output symbols $\langle f, \{a (\pi) \rightarrow a (\pi 1),\ b (\pi 1) \rightarrow b (\pi)\} \rangle$
	and $\langle f, \{a (\pi) \rightarrow a (\pi 1)\}\rangle$ of~$T$ for example, the rule sets
	\[R'_{\langle f, \{a (\pi) \rightarrow a (\pi 1),\ b (\pi 1) \rightarrow b (\pi)\} \rangle}=\{a (\pi) \rightarrow a (\pi 1),\ b (\pi 1) \rightarrow b (\pi)\}\]
	and 
	\[R'_{\langle f, \{a (\pi) \rightarrow a (\pi 1)\} \rangle}=\{a (\pi) \rightarrow a (\pi 1)\},\]
	respectively. We proceed analogously for the remaining output symbols.
	Note that $R'_{\#}=R_\#$.
\end{example}

Recall that $\tau_C$ denotes the tree translation realized by the composition of 
$T$ and  $D'$ and that $\breve{\tau}_A$ is the set of all pairs $(s,t) \in \tau_A$ for which there is  a
a uniform translation. We now prove the following statement.

\begin{lemma}\label{composition correct}
	The sets $\breve{\tau}_A$ and $\tau_C$ are equal.
\end{lemma}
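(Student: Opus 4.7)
\medskip

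\noindent
\textbf{Proof plan.}
I will prove the two inclusions $\tau_C \subseteq \breve{\tau}_A$ and $\breve{\tau}_A \subseteq \tau_C$ separately, in both cases exhibiting a step-by-step correspondence between a derivation of $A$ on input $s$ and a derivation of $D'$ on the relabeled input $\tilde s$. The bridge in both directions is the observation that a set $\bar R\subseteq R_\sigma$ is unambiguous iff it can serve simultaneously as (i)~the set of rules that a uniform derivation of $A$ uses at every node labeled $\sigma$ chosen to carry $\bar R$, and (ii)~the full rule set of a deterministic $att$ at a node labeled $\langle\sigma,\bar R\rangle$.

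\medskip

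\noindent
\textbf{Inclusion $\tau_C\subseteq \breve\tau_A$.}
Take $(s,t)\in\tau_C$, so there is $\tilde s$ with $(s,\tilde s)\in\tau_T$ and $(\tilde s,t)\in\tau_{D'}$. From the form of $T$, every node $v$ of $\tilde s^{\#}$ has label $\langle s[v],\bar R_v\rangle$ for some unambiguous $\bar R_v\subseteq R_{s[v]}$ (with $\bar R_\varepsilon=R_\#$). Consider the (unique) derivation $a_0(1)\Rightarrow_{D',\tilde s^\#}^* t$. Since $R'_{\langle\sigma,\bar R\rangle}=\bar R$ by the construction of $D'$, every step of this derivation rewrites some $\gamma(v.i)$ using a rule already present in $R_{s[v]}$. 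Reading the same sequence of rewrites over $s^\#$ therefore yields a derivation $a_0(1)\Rightarrow_{A,s^\#}^* t$, showing $(s,t)\in\tau_A$. This derivation is uniform: if two steps rewrite the same $\gamma(v.i)$, then at node $v$ the same rule of $\bar R_v$ must be picked, because $\bar R_v$ is unambiguous and $D'$ is deterministic. Hence $(s,t)\in\breve\tau_A$.

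\medskip

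\noindent
\textbf{Inclusion $\breve\tau_A\subseteq \tau_C$.}
Let $(s,t)\in\breve\tau_A$ and fix a uniform derivation $\tau=(t_1\Rightarrow_{A,s^\#}\cdots\Rightarrow_{A,s^\#}t_n)$ witnessing this, with $t_1=a_0(1)$ and $t_n=t$. For each $v\in V(s^\#)$, let $\bar R_v\subseteq R_{s[v]}$ be the set of rules applied at node $v$ somewhere along $\tau$ (so $\bar R_\varepsilon\subseteq R_\#$, which by Proposition~\ref{assumption} is unambiguous). Uniformity of $\tau$ says that whenever two steps rewrite the same label $\gamma(v.i)$ they use the same rule; hence $\bar R_v$ contains no two distinct rules with the same left-hand side, i.e., $\bar R_v$ is unambiguous. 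Define $\tilde s$ by relabeling each $v\in V(s)$ with $\langle s[v],\bar R_v\rangle$. By the rules of $T$, $(s,\tilde s)\in\tau_T$. Replaying the steps of $\tau$ now as rewrites with respect to $\tilde s^\#$ is legal, because every rule actually used at a node $v$ lies in $\bar R_v=R'_{\langle s[v],\bar R_v\rangle}$; this gives a derivation $a_0(1)\Rightarrow_{D',\tilde s^\#}^* t$. So $(\tilde s,t)\in\tau_{D'}$ (since $D'$ is deterministic, this is the unique derivation it can carry out on $\tilde s$), and therefore $(s,t)\in\tau_C$.

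\medskip

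\noindent
\textbf{Main difficulty.}
The only non-routine part is justifying that in the second direction, choosing $\bar R_v$ to be \emph{exactly} the multiset of rules fired at $v$ in $\tau$ yields an unambiguous set; this is where the uniformity hypothesis is used in a crucial way. Once this is in place, the fact that $D'_{\langle\sigma,\bar R\rangle}$ is defined to be precisely $\bar R$ makes the step-by-step simulation in both directions essentially a bookkeeping exercise, and no induction on tree structure is required beyond reading the derivation of $\tau$ (or of $D'$ on $\tilde s$) step by step.
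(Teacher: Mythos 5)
Your proof is correct and follows essentially the same route as the paper: one direction reads the deterministic derivation of $D'$ on $\tilde s$ back as a uniform derivation of $A$ (using $R'_{\langle\sigma,\bar R\rangle}\subseteq R_\sigma$ and unambiguity/determinism), and the other extracts from a uniform derivation the set of rules used at each node, which uniformity makes unambiguous, to build the relabeling via $T$ and replay the derivation in $D'$. This matches the paper's argument, including the treatment of the root marker via the unambiguity of $R_\#$.
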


\begin{proof}
To begin our proof we show that $\tau_C \subseteq \breve{\tau}_A$.
Let $(s, \tilde{s}) \in \tau_{T}$ and let $(\tilde{s}, t) \in \tau_{D'}$.
Let 
\[
a_0 (1) = t_1 \Rightarrow_{D', \tilde{s}^\#} t_2 \Rightarrow_{D', \tilde{s}^\#} \cdots \Rightarrow_{D', \tilde{s}^\#} t_n=t 
\]
be the corresponding translation of $D'$ on input $\tilde{s}$. 
Let $\langle\sigma, \bar{R} \rangle$ be an arbitrary symbol in $\tilde{\Sigma}$.
Recall that this means that
$\sigma\in \Sigma$ and $\bar{R}$ is an unambiguous subset of $R_\sigma$.
By construction of
$D'$, $R'_{\langle\sigma, \bar{R} \rangle} \subseteq R_{\sigma}$.
Hence,  a translation $\tau$  of $A$ on input $s$ exists that is of the form
\[
\tau= (a_0 (1) = t_1 \Rightarrow_{A, s^\#} t_2 \Rightarrow_{A, s^\#} \cdots \Rightarrow_{A, s^\#} t_n=t) 
\]
Since $D'$ is deterministic, $\tau$ is a uniform translation.

To show the converse, that is, to show that $\tau_C \supseteq \breve{\tau}_A$, consider a uniform
translation
\[
\tau= (a_0 (1) = t_1 \Rightarrow_{A, s^\#} t_2 \Rightarrow_{A, s^\#} \cdots \Rightarrow_{A, s^\#} t_n=t) 
\]
of $A$ on input $s\in T_\Sigma$.
Let $v\in V(s)$ and let $s[v]= \sigma \in \Sigma_k$, $k\geq 0$. Let $a\in S$ and $a(\pi)\rightarrow \xi \in R_\Sigma$.
Recall that for a right-hand side $\xi$ of a rule in $R_\sigma$ and a node $v$,
$\xi [\pi\leftarrow v]$ denotes the tree obtained by replacing all occurrences
of $\pi$ by $v$ (cf. Section~\ref{preliminaries}).

In the following, we say that the rule $a(\pi)\rightarrow \xi$ is \emph{used in $\tau$ at the input node $v$} if
$i\in [n]$ and a node $u\in V(t_i)$ exist such that $t_i[u] = a(1.v)$ and 
$t_{i+1}=t_i[u \leftarrow \xi[\pi \leftarrow 1.v ] ]$. 

Analogously, we say that the rule $b (\pi j)\rightarrow \xi$, where $b\in I$ and $k\in [k]$, 
is \emph{used in $\tau$ at the input node $v$}  if
$i\in [n]$ and a node $u\in V(t_i)$ exist such that $t_i[u] = b(1.v.j)$ and 
$t_{i+1}=t_i[u \leftarrow \xi[\pi \leftarrow 1.v ] ]$.

Note that the reason why we have $a(1.v)$ and $ b(1.v.j)$ in the definitions above is
because translations of $A$ always involve the root marker.

Denote by $\tau[v]$ the set of all rules used in $\tau$ at the input node $v$.
Note that since $\tau$ is a uniform translation, $\tau[v]$ is an unambiguous subset of $R_\sigma$.
By construction of $T$, it should be clear that $T$ can transform the tree $s$ into the tree $\tilde{s}$
over $\tilde{\Sigma}$ such that if the node $v$ is labeled by the symbol $\sigma$ in $s$
then $v$ is labeled by $\langle\sigma,\tau[v] \rangle$ in $\tilde{s}$.
By definition of $D'$ it follows that there is a translation of $D'$ on input $\tilde{s}$
that outputs $t$. Hence, our lemma follows.
\end{proof}

Before we construct the $datt^U$ $D=(U,D')$ from $T$ and $D'$, note that we 
can assume that $\text{range} (T) \subseteq \text{dom} (D')$.
This is because by
Proposition~\ref{domain} (see also~\cite{DBLP:journals/ipl/FulopM00}), 
the domain of the $datt$ $D'$ is effectively regular.
Hence a (nondeterministic) top-down automaton $M$ recognizing $D'$ exists.
In order to restrict the range of $T$ to the domain of $D'$, we simply run $M$ in parallel with
$T$ in the usual way, that is, we construct a new top-down relabeling 
from $T$ and $M$ using the product construction (cf~\cite[Definition on p.195]{DBLP:journals/iandc/Baker78b}).\\

Furthermore, note by~\cite{DBLP:journals/ipl/Engelfriet78},
that the following holds  for top-down tree transducers, i.e., $att$ without inherited attributes.

\begin{proposition}\label{engelfriet}
	For any top-down tree transducer $M$, a deterministic top-down tree transducer $M'$ with regular-look-ahead
	can be constructed such that $M'$ realizes a uniformizer of $\tau_M$. 
\end{proposition}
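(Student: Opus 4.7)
The plan is to use regular look-ahead to tell the deterministic transducer, at each node, precisely which rules of $M$ keep alive the possibility of a successful derivation on the subtree there; among these ``safe'' rules we then pick a canonical one by a fixed linear order on rules. This is the standard uniformization idea: look-ahead resolves nondeterminism by eliminating only those choices that would lead to failure.

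The key steps, in order, are as follows. First, I would show that for every state $q$ of $M$, the language $\text{dom}_q(M) = \{s \in T_\Sigma \mid q(s)\Rightarrow_M^* t \text{ for some } t\in T_\Delta\}$ is a regular tree language, effectively constructible from $M$. This is a routine bottom-up fixed-point computation: a subtree $\sigma(s_1,\dots,s_k)$ lies in $\text{dom}_q(M)$ iff some rule $q(\sigma(x_1,\dots,x_k))\to \xi$ of $M$ exists such that for every leaf $q'(x_i)$ occurring in $\xi$ we have $s_i \in \text{dom}_{q'}(M)$. Second, I would combine these regular languages into a single deterministic bottom-up automaton $B$ whose state at a node $v$ records the set $P(v) = \{q \mid s/v \in \text{dom}_q(M)\}$. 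This $B$ serves as the regular look-ahead.

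Third, I would fix an arbitrary linear order on the rules of $M$ and define $M'$ as follows: for each triple $(q,\sigma, (P_1,\dots,P_k))$ where $q$ is a state of $M$, $\sigma\in\Sigma_k$ and $P_i$ is a look-ahead state, I include the unique rule $q(\sigma(x_1,\dots,x_k))\to\xi$ obtained by taking the first rule of $M$ (with respect to the fixed order) having left-hand side $q(\sigma(x_1,\dots,x_k))$ such that every occurrence of $q'(x_i)$ in $\xi$ satisfies $q'\in P_i$; if no such rule exists, include no rule. This makes $M'$ a deterministic top-down tree transducer with look-ahead. Finally, I would verify by induction on the size of the input tree that the invariant ``the current state at node $v$ belongs to $P(v)$'' is maintained whenever $s\in\text{dom}(M)$, so that a rule is always available in $M'$, and that $M'$ terminates producing an output which is one of the outputs $M$ could produce on $s$. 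Hence $\tau_{M'}\subseteq \tau_M$ and $\text{dom}(\tau_{M'})=\text{dom}(\tau_M)$.

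The main obstacle is the first step: ensuring that the domains $\text{dom}_q(M)$ are simultaneously and effectively regular in the presence of genuine top-down nondeterminism, where different copies of the same subtree can be read in different states. The delicate point is that the ``domain condition'' for a subtree is a conjunction over all state-calls issued by a chosen rule and a disjunction over the choice of rule; the fixed-point over subsets of states computes precisely this AND-OR reachability, and once it is encoded as a deterministic bottom-up automaton the remainder of the construction is mechanical.
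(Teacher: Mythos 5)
Your proposal is correct and follows essentially the same route as the paper's proof (itself a sketch of Engelfriet's original argument): use regular look-ahead to detect which rules are applicable at a node, i.e.\ lead to a ground output tree on the current subtree, and resolve the remaining nondeterminism by picking the first applicable rule under a fixed total order. Your elaboration of the look-ahead as the deterministic bottom-up automaton computing the sets $\{q \mid s/v \in \mathrm{dom}_q(M)\}$ and the inductive invariant are exactly the details the paper leaves implicit.
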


\begin{proof}
	Subsequently we briefly sketch the idea of the procedure given in~\cite{DBLP:journals/ipl/Engelfriet78}.
	Consider the top-down tree transducer $M$.
	 Denote by $\Sigma$ and $\Delta$ the input and output alphabet of $M$, respectively.
	Let $R$ be the set of rules of $M$
	Assume a total order on the rules in $R$.
	
	Given $M$, we construct $M'$ such that $M'$ has the same states as $M$.
	With theses states $M'$ simulates $M$ as follows:
	Let the state $q$ process a node labeled by $\sigma\in \Sigma$.
	Let $\rho_1,\dots,\rho_n$ be the rules of $M$ for the symbol $\sigma$ where $q$ occurs on the left-hand side.
	To simulate $M$, $M'$ checks which of the rules $\rho_1,\dots,\rho_n$ is \emph{applicable}, that is, which of these
	rules lead to the generation of a ground output tree (i.e., a tree in $T_\Delta$).
	Whether or not a rule is applicable can be tested by $M'$ using its look-ahead.
	Among the applicable rules, $M'$ then picks the first one according to the given total order on $R$
	to simulate $M$.
\end{proof}

Note that if $M$ is a top-down relabeling, then the top-down tree transducer $M'$ with regular-look-ahead that
the procedure in~\cite{DBLP:journals/ipl/Engelfriet78} yields is in fact a
top-down relabeling with regular-look-ahead, i.e., a look-around.\\

\noindent
\textbf{Construction of the $\mathbf{datt^U \ D=(U,D')}$.}\quad
To construct the $datt^U$ $D$, all we need to do is to construct $U$ from $T$.
Note that by definition,
$T$ is also a  top-down tree transducer.
Therefore, due to Proposition~\ref{engelfriet},
let  $U$ be constructed as one of the top-down tree transducers  with regular-look-ahead
that realizes a uniformizer of $\tau_T$.
Since $T$ is a top-down relabeling, previous observations yield that
$U$ is a look-around. 
This concludes the construction of the $datt^U$ $D$.

\begin{example}
	Consider the nondeterministic top-down relabeling $T$ and the $datt$ $D'$ in Example~\ref{running example 2}.
	It can be verified that the domain of $D'$ consists of trees $s$ for which the following statements hold:
	\begin{enumerate}
		\item Let $v$ be the root of $s$. Then $v$ is  either labeled by 
		\begin{enumerate}
			\item $\langle e, R_e \rangle$ or 
			\item by a symbol of the form $\langle \sigma, \bar{R} \rangle$, where $\sigma\in \{f,g,h\}$. The set
			$\bar{R}$ is a subset of $R_\sigma$ that contains the rule $a'\rightarrow e$ as well as a rule  in which $a (\pi)$ occurs on the left-hand side
			and a rule  in which $b(\pi 1)$ occurs on the left-hand side.
		\end{enumerate}
		\item All other nodes $v'$  of $s$, are  either labeled by 
		\begin{enumerate}
			\item $\langle e, \bar{R}' \rangle$ where $\bar{R}$ is a subset of $R_e$  that contains the rule $a(\pi)\rightarrow \b (\pi)$  or 
			\item by a symbol of the form $\langle \sigma, \bar{R} \rangle$, where $\sigma\in \{f,g,h\}$ and the set
			$\bar{R}$ is a subset of $R_\sigma$
			that contains a rule  in which $a (\pi)$ occurs on the left-hand side
			and a rule  in which $b(\pi 1)$ occurs on the left-hand side.
		\end{enumerate}
	\end{enumerate}
	It is easy to see that the rules of $T$ can be modified such that $T$  only outputs  trees for which
	the previous statements hold.
	
	Recall that by~\cite{DBLP:journals/ipl/Engelfriet78}, Proposition~\ref{engelfriet} holds.
	Hence,  top-down tree transducer  with regular-look-ahead
	that realizes a uniformizer of $\tau_T$.
	In fact one of the possible  top-down tree transducer  with regular-look-ahead that the procedure of~\cite{DBLP:journals/ipl/Engelfriet78}
	yields is $U=(B,T')$ where $B$ is realizes the identity and $T'=(\{q_0,q\}, \emptyset,\Sigma, \tilde{\Sigma}, q_0,\tilde{R}')$.
	The set $\tilde{R}'$ contains the rules
	\[
	\begin{array}{cll c cll}
		q_0 (f (x_1)) & \rightarrow & \langle f, \bar{R}_f \rangle ( q (x_1)) &\quad&
		q_0 (g (x_1)) & \rightarrow & \langle g, \bar{R}_g \rangle ( q (x_1)) \\
		q_0 (h (x_1)) & \rightarrow & \langle h, \bar{R}_h \rangle ( q (x_1)) &\quad&
		q_0 (e) & \rightarrow & \langle e, \bar{R}_e \rangle ( q (x_1)) \\
	\end{array}
	\]
	along with the rules
	\[
	\begin{array}{cll c cll}
		q (f (x_1)) & \rightarrow & \langle f, \bar{R}_f \rangle ( q (x_1)) &\quad&
		q (g (x_1)) & \rightarrow & \langle g, \bar{R}_g \rangle ( q (x_1)) \\
		q (h (x_1)) & \rightarrow & \langle h, \bar{R}_h \rangle ( q (x_1)) &\quad&
		q (e) & \rightarrow & \langle e, \bar{R}_e \rangle ( q (x_1)) \\
	\end{array}
	\]
	where
	\begin{itemize}
		\item  $\bar{R}_f= \{a_{\pi} (\pi)\rightarrow e, a(\pi) \rightarrow a(\pi 1),\ b(\pi1)\rightarrow  b(\pi)\}$,
		\item $\bar{R}_g=\{ a_{\pi} (\pi)\rightarrow e, a_{\pi} (\pi)\rightarrow e,\ a(\pi) \rightarrow g' (a(\pi 1), a(\pi 1)),\ b(\pi1)\rightarrow  b(\pi)\}$,
		\item $\bar{R}_h=  \{a_{\pi} (\pi)\rightarrow e,\ a(\pi) \rightarrow h (a(\pi 1)),\ b(\pi1)\rightarrow  b(\pi)\}$, and
		\item $\bar{R}_e=\{a_{\pi} (\pi)\rightarrow e, a(\pi) \rightarrow  b(\pi)\}$.
	\end{itemize}
	Note that $U$ is a look-around. Hence, together with $D'$, $U$ yields a $datt^U$ that realizes a unifomizer of $\tau_A$.
\end{example}

\begin{lemma}\label{uniformizer att}
	The $datt^U$ $D=(U,D')$ realizes a uniformizer of $\tau_A$.
\end{lemma}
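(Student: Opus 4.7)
The plan is to unfold $\tau_D = \tau_U \circ \tau_{D'}$ and apply the results already established: Lemma~\ref{composition correct} (which gives $\tau_C = \tau_T \circ \tau_{D'} = \breve{\tau}_A$), Lemma~\ref{uniform translation} (which gives $\mathrm{dom}(\breve{\tau}_A) = \mathrm{dom}(\tau_A)$), Proposition~\ref{engelfriet} (which guarantees that $U$ realizes a uniformizer of $\tau_T$), and the prior restriction of $\mathrm{range}(T)$ to $\mathrm{dom}(D')$ via the product construction with an automaton recognizing $\mathrm{dom}(D')$. Under this setup, I need to verify three things: that $\tau_D$ is a function, that $\tau_D \subseteq \tau_A$, and that $\mathrm{dom}(\tau_D) = \mathrm{dom}(\tau_A)$.

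First, $\tau_D$ is a function, because $\tau_U$ is realized by a deterministic top-down transducer with look-ahead and $\tau_{D'}$ is realized by a $datt$, both of which are partial functions; the composition of two partial functions is a partial function. Second, since $\tau_U$ is a uniformizer of $\tau_T$, in particular $\tau_U \subseteq \tau_T$, and therefore
\[
\tau_D \;=\; \tau_U \circ \tau_{D'} \;\subseteq\; \tau_T \circ \tau_{D'} \;=\; \tau_C \;=\; \breve{\tau}_A \;\subseteq\; \tau_A,
\]
where the second-to-last equality is Lemma~\ref{composition correct} and the last inclusion is Lemma~\ref{uniform translation}. This in particular yields $\mathrm{dom}(\tau_D) \subseteq \mathrm{dom}(\tau_A)$.

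For the reverse domain inclusion, take any $s \in \mathrm{dom}(\tau_A)$. By Lemma~\ref{uniform translation}, $s \in \mathrm{dom}(\breve{\tau}_A) = \mathrm{dom}(\tau_C)$, so there exists some $\tilde{s}$ with $(s,\tilde{s}) \in \tau_T$ and $\tilde{s} \in \mathrm{dom}(D')$. Since $\tau_U$ is a uniformizer of $\tau_T$, we have $\mathrm{dom}(\tau_U) = \mathrm{dom}(\tau_T)$, so there exists $\tilde{s}'$ with $(s,\tilde{s}') \in \tau_U \subseteq \tau_T$. Because the range of $T$ was restricted to $\mathrm{dom}(D')$ prior to the construction of $U$, we have $\tilde{s}' \in \mathrm{dom}(D')$, so there exists $t$ with $(\tilde{s}',t) \in \tau_{D'}$, whence $(s,t) \in \tau_D$ and $s \in \mathrm{dom}(\tau_D)$.

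The step that needs the most care is the reverse domain inclusion, and the key technical point there is the range restriction of $T$ to $\mathrm{dom}(D')$: without it, the uniformizer $U$ of $\tau_T$ might produce an output $\tilde{s}' \notin \mathrm{dom}(D')$ even though some other $T$-output for the same input lies in $\mathrm{dom}(D')$, which would break domain preservation of the composition. Once this restriction is in place, the uniformization of $T$ composes cleanly with the deterministic $D'$ and the three required properties follow immediately from the auxiliary lemmas.
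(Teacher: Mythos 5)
Your proof is correct and takes essentially the same route as the paper's: both establish that $\tau_D$ is a function by determinism, obtain $\tau_D\subseteq\tau_C=\breve{\tau}_A\subseteq\tau_A$ from $\tau_U\subseteq\tau_T$ together with Lemmas~\ref{composition correct} and~\ref{uniform translation}, and prove the domain equality using $\mathrm{dom}(\tau_U)=\mathrm{dom}(\tau_T)$ and the prior restriction $\mathrm{range}(T)\subseteq\mathrm{dom}(D')$. The only cosmetic difference is that the paper phrases the argument as showing $D$ uniformizes $\tau_C$ first, while you unfold $\tau_D=\tau_U\circ\tau_{D'}$ directly.
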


\begin{proof}
	Due to Lemmas~\ref{uniform translation} and~\ref{composition correct}, it is obviously sufficient to show that
	$D$ realizes a uniformizer of $\tau_C$. To start with, note that obviously $\tau_D$ is a function since $D$ is deterministic.
	
	By construction of $D$, it follows obviously that $\tau_D \subseteq \tau_C$.
	In particular, this follows since by construction $U$ realizes a uniformizer of $\tau_T$ and hence
	$\tau_U\subseteq \tau_T$.
	
	We now show that $\text{dom} (\tau_C) = \text{dom} (D)$.
	First note that $\tau_D \subseteq \tau_C$ implies $\text{dom} (\tau_C) \subseteq \text{dom} (D)$.
	For the converse, let $s \in \text{dom} (\tau_C)$.
	Obviously, this means that $s\in \text{dom} (T)$. Since $U$ realizes a uniformizer of $\tau_T$, 
	it has  the same domain as $\tau_T$.
	Therefore, it follows that $s\in \text{dom} (U)$. Let  $(s,\tilde{s}) \in \tau_U$.
	Recall our previous assumption that $\text{range} (T) \subseteq \text{dom} (D')$.
	Obviously, this means that $\text{range} (U) \subseteq \text{dom} (D')$ and hence $\tilde{s} \in \text{dom} (D')$.
	This in turn yields that  $s \in \text{dom} (D)$ and hence $\text{dom} (\tau_C) = \text{dom} (D)$.
\end{proof}

\noindent
Due to Lemma~\ref{uniformizer att}, the following holds.

\begin{theorem}\label{functional}
	For any  $att$ $A$ a $datt^U$ $D$ can be constructed such that $D$ realizes a uniformizer of $\tau_A$.
\end{theorem}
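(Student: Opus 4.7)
The plan is to assemble the construction already built in this section and invoke the lemmas as black boxes. Given the $att$ $A$, I first apply Proposition~\ref{assumption} to assume that $R_\#$ is unambiguous. Then I construct the nondeterministic top-down relabeling $T$ and the $datt$ $D'$ exactly as described above, so that $T$ guesses, at each input node $v$, an unambiguous subset $\bar R\subseteq R_{s[v]}$ and $D'$ then deterministically executes precisely those rules. By Lemma~\ref{composition correct} the composition satisfies $\tau_C=\breve{\tau}_A$, and by Lemma~\ref{uniform translation} we have $\mathrm{dom}(\tau_C)=\mathrm{dom}(\breve{\tau}_A)=\mathrm{dom}(\tau_A)$.

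Next, I use Proposition~\ref{domain} to get a top-down automaton recognizing $\mathrm{dom}(D')$, and take the product of $T$ with this automaton (as in~\cite[Definition on p.195]{DBLP:journals/iandc/Baker78b}) to obtain a new top-down relabeling—still called $T$—whose range is contained in $\mathrm{dom}(D')$, without changing the composed translation $\tau_C$. Then, viewing $T$ as a top-down tree transducer, I apply Proposition~\ref{engelfriet} to obtain a deterministic top-down tree transducer with regular look-ahead $U$ that realizes a uniformizer of $\tau_T$; because $T$ is a relabeling the resulting $U$ is itself a top-down relabeling with regular look-ahead, i.e., a look-around. Setting $D=(U,D')$ then gives a $datt^U$.

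It remains to verify that $\tau_D$ is a uniformizer of $\tau_A$, which is precisely Lemma~\ref{uniformizer att}. The argument goes: determinism of $D$ makes $\tau_D$ a function; since $\tau_U\subseteq\tau_T$ we get $\tau_D\subseteq\tau_C\subseteq\tau_A$; and for domains, $\mathrm{dom}(U)=\mathrm{dom}(T)$ together with the arranged inclusion $\mathrm{range}(T)\subseteq\mathrm{dom}(D')$ gives $\mathrm{dom}(D)=\mathrm{dom}(\tau_C)=\mathrm{dom}(\tau_A)$. Combining these, $\tau_D$ is a function contained in $\tau_A$ with $\mathrm{dom}(\tau_D)=\mathrm{dom}(\tau_A)$, which is the definition of a uniformizer.

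The main obstacle is purely bookkeeping: ensuring that the domain of the composed translation is preserved when $U$ replaces $T$. This is exactly why the range of $T$ must first be restricted to $\mathrm{dom}(D')$; without that restriction a uniformizer of $\tau_T$ might select, at some input $s$, an annotation $\tilde s\notin\mathrm{dom}(D')$ and thereby shrink the overall domain strictly below $\mathrm{dom}(\tau_A)$. Once this restriction is in place the remaining steps are direct applications of the cited results.
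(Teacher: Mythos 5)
Your proposal is correct and follows exactly the paper's route: the paper proves this theorem simply by assembling the constructions of $T$, $D'$, the restriction $\text{range}(T)\subseteq\text{dom}(D')$, and the look-around $U$ obtained via Proposition~\ref{engelfriet}, and then invoking Lemma~\ref{uniformizer att} (whose proof is the same function/inclusion/domain argument you spell out). The subtlety you flag --- restricting the range of $T$ to $\text{dom}(D')$ before uniformizing $\tau_T$ --- is indeed the same point the paper handles via Proposition~\ref{domain} and the product construction.
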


\section{From Unformizers for $\mathbf{Att}$ to Uniformizers for $\mathbf{Att^U}$}\label{attu section}
Consider an arbitrary $att^U$ $\hat{A}=(U,A)$.
In the following, we show how to construct a $datt^U$ $\hat{D}$ such that $\tau_{\hat{D}}$ is a uniformizer of
$\tau_{\hat{A}}$.
Consider the underlying $att$ $A$ of $\hat{A}$.
Due to Theorem~\ref{functional}, a $datt^U$ $\hat{D}'=( U', D)$ that realizes a uniformizer of $\tau_A$ can be constructed.
Clearly
\[
\begin{array}{cl}
	&\{(s,t) \mid (s,\breve{s}) \in \tau_U \text{ and } (\breve{s},t)\in \tau_{\hat{D}'}\}\\
	=&\{(s,t) \mid (s,\breve{s}) \in \tau_U \text{ and } (\breve{s},\hat{s})\in \tau_{U'} \text{ and }
	(\hat{s},t) \in \tau_{D}\}
\end{array}
\]
is a uniformizer of $\tau_{\hat{A}}$. Consider the look-arounds $U$ and $U'$.
Due to Theorem~2.6 of~\cite{DBLP:journals/mst/Engelfriet77} and Theorem~1 of~\cite{DBLP:journals/iandc/Baker78b} (and their proofs),
look-arounds  are closed under composition. In other words, a look-around $\hat{U}$ can be constructed such that 
\[
\tau_{\hat{U}}=\{ (s,\hat{s}) \mid  
(s,\breve{s}) \in \tau_U \text{ and }(\breve{s},\hat{s})\in \tau_{U'}\}.
\]
Therefore,
the  $datt^U$ $(\hat{U}, D)$ realizes a uniformizer of $\tau_{\hat{A}}$. This yields the following.

\begin{theorem}\label{uniformizer}
	For any $att^U$ $\hat{A}$ a $datt^U$ that realizes a uniformizer of $\tau_{\hat{A}}$ can be constructed.
\end{theorem}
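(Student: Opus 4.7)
The plan is to reduce Theorem~\ref{uniformizer} to Theorem~\ref{functional} by peeling off the outer look-around. Given $\hat{A} = (U, A)$, I would first apply Theorem~\ref{functional} to the underlying $att$ $A$ to obtain a $datt^U$ $\hat{D}' = (U', D)$ whose translation $\tau_{\hat{D}'}$ is a uniformizer of $\tau_A$. The natural candidate for a uniformizer of $\tau_{\hat{A}}$ is then the three-stage composition obtained by first relabeling via $U$, then relabeling via $U'$, and finally translating via $D$, i.e.\ the set $\{(s,t) \mid (s,\breve{s}) \in \tau_U,\ (\breve{s},\hat{s}) \in \tau_{U'},\ (\hat{s},t) \in \tau_D\}$.

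Next I would verify that this composition is indeed a uniformizer of $\tau_{\hat{A}} = \tau_U \circ \tau_A$. Since $U$ is a top-down relabeling with look-ahead it is deterministic, so $\tau_U$ is a partial function; together with the fact that $\tau_{\hat{D}'}$ is a function (because $\hat{D}'$ is a $datt^U$), the composition is a function, and it is clearly contained in $\tau_{\hat{A}}$. For domain equality, any $s \in \text{dom}(\hat{A})$ admits some $\breve{s}$ with $(s,\breve{s}) \in \tau_U$ and $\breve{s} \in \text{dom}(A)$; since $\tau_{\hat{D}'}$ has the same domain as $\tau_A$ by Theorem~\ref{functional}, we have $\breve{s} \in \text{dom}(\hat{D}')$, so $s$ lies in the domain of the composition.

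The main obstacle is to realize this pipeline as a single $datt^U$, because a $datt^U$ only carries one look-around, not two nested ones. Concretely, I must collapse $U$ and $U'$ into a single look-around $\hat{U}$ satisfying $\tau_{\hat{U}} = \{(s,\hat{s}) \mid (s,\breve{s}) \in \tau_U,\ (\breve{s},\hat{s}) \in \tau_{U'}\}$. This is where I would invoke the closure of look-arounds under composition, which follows by combining Theorem~2.6 of~\cite{DBLP:journals/mst/Engelfriet77} (composition for deterministic top-down tree transducers with regular look-ahead) with Theorem~1 of~\cite{DBLP:journals/iandc/Baker78b} (for the bottom-up relabeling components): the underlying deterministic top-down relabelings compose into another deterministic top-down relabeling, and the two bottom-up look-aheads can be absorbed by a product construction. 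Having obtained $\hat{U}$, the pair $(\hat{U}, D)$ is a $datt^U$ whose translation equals the three-stage composition above, and therefore realizes a uniformizer of $\tau_{\hat{A}}$.
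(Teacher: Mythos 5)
Your proposal matches the paper's own proof essentially step for step: apply Theorem~\ref{functional} to the underlying $att$ $A$ to get $\hat{D}'=(U',D)$, observe that $\tau_U\circ\tau_{\hat{D}'}$ is a uniformizer of $\tau_{\hat{A}}$, and collapse $U$ and $U'$ into a single look-around $\hat{U}$ via the composition closure from Theorem~2.6 of Engelfriet and Theorem~1 of Baker, yielding the $datt^U$ $(\hat{U},D)$. The argument is correct; your explicit verification of functionality, containment, and domain equality only spells out what the paper leaves as ``clearly''.
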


\section{Final Results}
Analogously as in~\cite{DBLP:journals/ipl/Engelfriet78} for top-down tree transducers,
we obtain the following result using uniformizers.

\begin{theorem}\label{final result}
Let $C$ be a composition of $n$ $atts^U$. Then $datts^U$ $D_1,\dots, D_n$ can be constructed such that
$C$ and the composition of $D_1,\dots, D_n$ are equivalent.
\end{theorem}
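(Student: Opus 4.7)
The plan is to first note that the claim implicitly assumes $\tau_C$ is a partial function, since any composition of $datts^U$ is a function; this matches the motivation in the introduction. Writing $\tau_C = \tau_{\hat{A}_1} \circ \cdots \circ \tau_{\hat{A}_n}$, I would mimic Engelfriet's classical argument for top-down tree transducers: first prepare each $\hat{A}_i$ so that its range lies in the set of inputs on which the continuation $\tau_{\hat{A}_{i+1}} \circ \cdots \circ \tau_{\hat{A}_n}$ is defined, and then apply Theorem~\ref{uniformizer} to each prepared transducer stage by stage. Functionality of $\tau_C$ will force the resulting deterministic composition to agree with $\tau_C$ on all of $\text{dom}(\tau_C)$.

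For the preparation, I would compute from right to left the regular tree languages $L_{n+1} = T_{\Delta_n}$ and $L_i = \tau_{\hat{A}_i}^{-1}(L_{i+1}) = \text{dom}(\tau_{\hat{A}_i} \circ \cdots \circ \tau_{\hat{A}_n})$ for $i = n, \ldots, 1$. By Proposition~\ref{domain} the base $L_n = \text{dom}(\hat{A}_n)$ is regular, and inductively each $L_i$ is regular because $att^U$ translations are closed under preimage of regular tree languages (a standard closure, obtainable by pairing the look-around with a bottom-up automaton for $L_{i+1}$ and re-invoking Proposition~\ref{domain}). Then for each $i$ I would construct an $att^U$ $\hat{A}_i'$ whose translation is $\{(s,s') \in \tau_{\hat{A}_i} \mid s \in L_i,\ s' \in L_{i+1}\}$: the domain restriction to $L_i$ is immediate by tightening the look-around of $\hat{A}_i$, while the range restriction to $L_{i+1}$ is the symmetric closure of $att^U$ translations under range-intersection with a regular tree language, realized by a product with a deterministic bottom-up automaton recognizing $L_{i+1}$ that is threaded through the attribute evaluation.

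Applying Theorem~\ref{uniformizer} to each $\hat{A}_i'$ then yields $datts^U$ $D_1, \ldots, D_n$ with $\tau_{D_i}$ a uniformizer of $\tau_{\hat{A}_i'}$. By construction $\text{dom}(D_i) = \text{dom}(\hat{A}_i') = L_i$ and $\text{range}(D_i) \subseteq L_{i+1} = \text{dom}(D_{i+1})$, so $\tau_{D_1} \circ \cdots \circ \tau_{D_n}$ has domain exactly $L_1 = \text{dom}(\tau_C)$. Since this composition is a function contained in $\tau_C$ and sharing the domain of the function $\tau_C$, it must equal $\tau_C$, as required.

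The main obstacle I expect is the range-restriction step used in the preparation: unlike the trivial domain restriction, restricting the output of an $att^U$ to a regular tree language requires a genuine product construction that threads the states of a bottom-up automaton through the attribute evaluation, and has to be verified to remain inside the $att^U$ class in the presence of inherited attributes and possible circularity. Everything else then reduces either to Theorem~\ref{uniformizer} or to the short functionality argument above.
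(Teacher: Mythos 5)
Your proposal is correct and follows essentially the same route as the paper: prepare the factors so that each range lands in the domain of what follows (using regularity of $att^U$ domains, Proposition~\ref{domain}, together with a product construction restricting an $att$'s output to a regular tree language), apply Theorem~\ref{uniformizer} to each prepared factor, and conclude by functionality that the deterministic composition, being a sub-function of $\tau_C$ with the same domain, equals $\tau_C$. The differences are only in bookkeeping: the paper states the preparation as $\text{range}(A_i)\subseteq\text{dom}(A_{i+1})$ (which, to work, must be carried out right to left and then coincides with your languages $L_i$), and it treats the output-restriction product --- the step you rightly single out as the delicate one --- as merely running the domain automaton ``in parallel'' with $A_i$.
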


\begin{proof}
	Consider the $atts^U$ $A_1,\dots,A_n$.
	Let the composition $C$ of $A_1,\dots,A_n$ be functional, i.e., let
	\[
	\tau_C=\{ (s,t) \mid (s,t_1) \in \tau_{A_1}, (t_{i-1},t_{1})\in \tau_{A_i} \text{ for } 1<i<n \text{ and }(t_{n-1},t) \in \tau_{A_n} \}
	\]
	be a function.
	For $i<n$, we can assume that  $\text{range} (A_i)\subseteq \text{dom} (A_{i+1})$ holds.
	In particular, by 
	Proposition~\ref{domain},  $\text{dom} (A_{i+1})$ is effectively recognizable, i.e., a (nondeterministic) top-down automaton
	$M_{A_{i+1}}$ recognizing $\text{dom} (A_{i+1})$ can be constructed. 
	By running $M_{A_{i+1}}$ in parallel to
	$A_i$, it should be clear that we can restrict the range of $A_i$ to trees in $\text{dom} (A_{i+1})$.
	
	By Theorem~\ref{uniformizer}, let $D_1,\dots,D_n$ be  $datts^U$ realizing uniformizers of $\tau_{A_1},\dots,\tau_{A_n}$, respectively.  
	Denote by $C_d$ the composition of  $D_1,\dots,D_n$. Obviously $\tau_{C_d} \subseteq \tau_C$.
	Note that  $\text{range} (A_i)\subseteq \text{dom} (A_{i+1})$ implies  $\text{range} (D_i)\subseteq \text{dom} (D_{i+1})$
	for $i<n$. 
	In particular, since $D_i$ and $D_{i+1}$ realize uniformizers of $A_i$ and $A_{i+1}$, respectively,
	\[
	\text{range} (D_i)\subseteq \text{range} (A_i)\subseteq \text{dom} (A_{i+1}) =  \text{dom} (D_{i+1}).
	\]
	Note that $\text{dom} (D_1) =\text{dom} (A_1)$.
	Thus the fact that $\text{range} (D_i)\subseteq \text{dom} (D_{i+1})$ for $i<n$, yields that $C_d$ and $C$ have the same domain.
	
	Since $\tau_C$ is a function and $\tau_{C_d} \subseteq \tau_C$, the latter yields that $\tau_{C_d}= \tau_C$
	which in turn yields Theorem~\ref{final result}.
\end{proof}

Denote by $(ATT^U)^n$  and  $(dATT^U)^n$ the classes of tree translations realizable by the composition of
$n$ nondeterministic $atts^U$ and $n$ deterministic $atts^U$, respectively. Theorem~\ref{final result} yields the following.

\begin{theorem}
	$(ATT^U)^n \cap \mathcal{F} = (dATT^U)^n$.
\end{theorem}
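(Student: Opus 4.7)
The statement is a two-way inclusion, and the plan is to treat each direction separately, with essentially all the real work already absorbed into Theorem~\ref{final result}.

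For the inclusion $(dATT^U)^n \subseteq (ATT^U)^n \cap \mathcal{F}$, I would argue directly from the definitions. First, every $datt^U$ is by definition an $att^U$, so any composition of $n$ $datts^U$ is certainly a composition of $n$ $atts^U$, which gives membership in $(ATT^U)^n$. Second, each $datt^U$ realizes a partial function (since its underlying $datt$ has at most one rule per left-hand side, and the preceding look-around is also deterministic in its output), and the composition of partial functions is a partial function; hence the resulting translation lies in $\mathcal{F}$. This direction requires no construction.

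For the nontrivial inclusion $(ATT^U)^n \cap \mathcal{F} \subseteq (dATT^U)^n$, I would let $\tau$ be an element of the left-hand side and fix $atts^U$ $A_1,\dots,A_n$ whose composition equals $\tau$. Since $\tau$ is a partial function, the composition $C$ of $A_1,\dots,A_n$ is functional, so Theorem~\ref{final result} applies and yields $datts^U$ $D_1,\dots,D_n$ whose composition is equivalent to $C$, i.e., whose composition equals $\tau$. This exhibits $\tau$ as a member of $(dATT^U)^n$.

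There is no real obstacle here: the statement is essentially a restatement of Theorem~\ref{final result} combined with the trivial observation that deterministic transducers compute functions. The only thing to be a little careful about is to observe that $\mathcal{F}$ on the right-hand side is automatic, so no additional argument is needed beyond invoking the previously constructed $D_1,\dots,D_n$.
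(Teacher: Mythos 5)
Your proposal is correct and matches the paper's intent: the paper derives this theorem directly from Theorem~\ref{final result} (the hard inclusion), and the converse inclusion is exactly the routine observation you make, that compositions of deterministic $atts^U$ are compositions of $atts^U$ realizing partial functions. No gap; your write-up merely spells out the easy direction that the paper leaves implicit.
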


Note that if an $att$ $A$ (with or without look-around) is functional, then any $datt^U$ that realizes a uniformizer of $\tau_A$
is in fact equivalent to $A$. Hence, Theorems~\ref{functional} and~\ref{uniformizer} yield the following

\begin{corollary}
	For any functional $att$ (with or without look-around) an equivalent $datt^U$ can be constructed.
\end{corollary}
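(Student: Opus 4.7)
The plan is to invoke the preceding theorems directly and then exploit the fact that when a relation is already a function, its only uniformizer is itself. More precisely, if $R \subseteq A \times B$ is a function and $F \subseteq R$ is any uniformizer of $R$, then $\text{dom}(F) = \text{dom}(R)$, so for every $a \in \text{dom}(R)$ there is exactly one $b \in B$ with $(a,b) \in R$, namely the unique image under $R$; since $(a,b') \in F$ for some $b'$ and $F \subseteq R$, functionality of $R$ forces $b' = b$, giving $F = R$.

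First I would split on whether the given functional $att$ is equipped with a look-around. If $A$ is a plain $att$, I invoke Theorem~\ref{functional} to obtain a $datt^U$ $D$ such that $\tau_D$ is a uniformizer of $\tau_A$. If instead the input is an $att^U$ $\hat{A}$, I invoke Theorem~\ref{uniformizer} analogously to obtain a $datt^U$ $\hat{D}$ with $\tau_{\hat{D}}$ a uniformizer of $\tau_{\hat{A}}$.

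Finally, using the functionality hypothesis, $\tau_A$ (resp.\ $\tau_{\hat{A}}$) is itself a function, so by the observation above the constructed $datt^U$ realizes precisely $\tau_A$ (resp.\ $\tau_{\hat{A}}$). Hence the constructed $datt^U$ is equivalent to the given transducer, establishing the corollary.

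There is no real obstacle here: the work has already been done in Theorems~\ref{functional} and~\ref{uniformizer}, and the remaining step is the trivial observation that a function admits no proper uniformizer. The only point worth stating explicitly in a written-out proof is the elementary lemma that $F \subseteq R$ with $\text{dom}(F) = \text{dom}(R)$ implies $F = R$ whenever $R$ is a function, which makes the reduction from "uniformizer of $\tau_A$" to "equivalent to $A$" immediate.
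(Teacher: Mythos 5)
Your proposal is correct and follows exactly the paper's reasoning: invoke Theorem~\ref{functional} (resp.\ Theorem~\ref{uniformizer}) to obtain a $datt^U$ realizing a uniformizer of the given translation, and then observe that a uniformizer of a partial function must coincide with it, since $F \subseteq R$ with $\text{dom}(F)=\text{dom}(R)$ and $R$ functional forces $F=R$. The only difference is that you spell out this elementary observation explicitly, which the paper states in a single remark before the corollary.
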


\section{Conclusion}
Consider an arbitrary composition $C$  of $n$ attributed tree transducer with look-around
that realizes a function.
In this paper we have provided a procedure which given $C$, computes 
$n$ deterministic attributed tree transducer with look-around $D_1,\dots, D_n$
such that $C$ and the composition of $D_1,\dots, D_n$ are equivalent.
To do so we have shown that  any attributed tree transducer $A$ admits a uniformizer realized a deterministic 
attributed tree transducer with look-around  can be constructed.
An obvious question is: Do we always need look-around? 
One wonders when a uniformizer of $A$ can be implemented by a deterministic attributed tree transducer
\emph{without} look-around? This question is addressed in~\cite{DBLP:conf/icalp/FiliotJLW16} for finite-valued string transducers.
Specifically,
for such transducers it is decidable whether or not a uniformizer realized by a deterministic string transducer exists.

Given our result another question 
is: 
Given a composition of attributed tree transducer $C$
is it decidable whether or not $C$ realizes a function? To the best of our knowledge this is an open problem.
Note that whether or not a composition of top-down tree transducers is functional has recently been
shown to be decidable~\cite{DBLP:journals/iandc/ManethSV24}.
In contrast,
even for a single attributed tree transducer it is unknown whether or not functionality is decidable.
Note that decidability of the latter would imply that equivalence of deterministic attributed tree transducers is decidable.
The latter is a long standing open problem.

\bibliographystyle{splncs04}
\bibliography{mybib}

\end{document}